\documentclass[12pt, draftclsnofoot, onecolumn]{IEEEtran}
\setlength{\oddsidemargin}{-.2 in}
\setlength{\evensidemargin}{-.2 in}
\setlength{\textheight}{9.4 in}
\setlength{\textwidth}{6.9 in}



\usepackage{mathptmx}

\DeclareSymbolFont{CMlargesymbols}{OMX}{cmex}{m}{n}
\DeclareMathSymbol{\sum}{\mathop}{CMlargesymbols}{"50}
\DeclareMathSymbol{\prod}{\mathop}{CMlargesymbols}{"51}


\DeclareMathAlphabet{\mathcal}{OMS}{cmsy}{m}{n}

\DeclareSymbolFont{Letters}{OML}{cmm}{m}{it}
\DeclareMathSymbol{\alpha}{\mathalpha}{Letters}{11}
\DeclareMathSymbol{\epsilon}{\mathalpha}{Letters}{15}
\DeclareMathSymbol{\lambda}{\mathalpha}{Letters}{21}
\DeclareMathSymbol{\Lambda}{\mathalpha}{Letters}{3}
\DeclareMathSymbol{\pi}{\mathalpha}{Letters}{25}
\DeclareMathSymbol{\rho}{\mathalpha}{Letters}{26}
\DeclareMathSymbol{\sigma}{\mathalpha}{Letters}{27}
\DeclareMathSymbol{\Delta}{\mathalpha}{Letters}{1}
\DeclareMathSymbol{\Phi}{\mathalpha}{Letters}{8}
\DeclareMathSymbol{\Psi}{\mathalpha}{Letters}{9}

\DeclareMathSymbol{Q}{\mathalpha}{Letters}{81}

\DeclareSymbolFont{symbols}{OMS}{cmm}{m}{n}
\DeclareMathSymbol{\infty}{\mathord}{symbols}{"31}


\usepackage{scalerel,stackengine}
\stackMath
\newcommand\wihat[1]{%
\savestack{\tmpbox}{\stretchto{%
		\scaleto{%
			\scalerel*[\widthof{\ensuremath{#1}}]{\kern-.6pt\bigwedge\kern-.6pt}%
			{\rule[-\textheight/2]{1ex}{\textheight}}
		}{\textheight}%
	}{0.5ex}}%
\stackon[1pt]{#1}{\tmpbox}%
}


\usepackage[noadjust]{cite}
\usepackage{yfonts}
\usepackage{empheq}
\usepackage{amsthm}
\usepackage{amsfonts}
\usepackage{amssymb}
\usepackage{amsmath}
\usepackage{tabularx}
\usepackage{comment}
\usepackage{graphicx}
\usepackage{tikz}
\usepackage[caption=false]{subfig}
\usepackage{mathtools}
\usepackage{float}
\usepackage{braket}
\usepackage{hyperref}
\hypersetup
{
	bookmarksdepth=-2,
	linkcolor=black,
	colorlinks=true,
	citecolor=blue
}
\usepackage[mathscr]{eucal}


\usepackage{glossaries}
\glossarystyle{listdotted}
\makeglossaries
\renewcommand*{\CustomAcronymFields}
{
	name={\the\glsshorttok},
	description={\the\glslongtok},
	first={\noexpand\emph{\the\glslongtok}\space(\the\glsshorttok)},%
	firstplural={\noexpand\emph{\the\glslongtok\noexpand\acrpluralsuffix}\space(\the\glsshorttok)},%
	text={\the\glsshorttok},%
	plural={\the\glsshorttok\noexpand\acrpluralsuffix}%
}

\newacronym{rhs}{RHS}{right hand side}
\newacronym{lhs}{LHS}{left hand side}

\newacronym{ap}{AP}{access point}
\newacronym{comp}{CoMP}{Coordinated Multipoint}
\newacronym{sumiso}{SU-MISO}{single-user multiple-input-single-output}
\newacronym{ibl}{IBL}{infinite block length}
\newacronym{fbl}{FBL}{finite block length}
\newacronym{icn}{ICN}{industrial control network}
\newacronym{lan}{LAN}{local area network}
\newacronym{wsn}{WSN}{wireless sensor network}
\newacronym{rt}{RT}{real-time}
\newacronym{tdm}{TDM}{time division multeplxing}
\newacronym{isi}{ISI}{inter-symbol interference}
\newacronym{nist}{NIST}{National Institute of Standards and Technology}
\newacronym{cbrs}{CBRS}{Citizens Broadband Radio Service}
\newacronym{los}{LOS}{line-of-sight}
\newacronym{nlos}{NLOS}{non-line-of-sight}
\newacronym{itu}{ITU}{International Telecommunications Union}
\newacronym{mmwave}{mmWave}{millimeter wave}
\newacronym{nsr}{NSR}{noise-to-signal ratio}
\newacronym{das}{DAS}{distributed antenna system}

\newacronym{csi}{CSI}{channel state information}
\newacronym{cqi}{CQI}{channel quality indicator}
\newacronym{ack}{ACK}{acknowledgement}
\newacronym{arq}{ARQ}{automatic repeat request}
\newacronym{awgn}{AWGN}{additive white Gaussian noise}
\newacronym{cc}{CC}{chase combining}
\newacronym{dp}{DP}{dynamic programming}
\newacronym{fec}{FEC}{forward error correction}
\newacronym{harq}{HARQ}{hybrid automatic repeat request}
\newacronym{hspa}{HSPA}{high speed packet access}
\newacronym{iid}{i.i.d.}{independent and identically distributed}
\newacronym{ir}{IR}{incremental redundancy}
\newacronym{lte}{LTE}{long term evolution}
\newacronym{mdp}{MDP}{markov decision process}
\newacronym{mrc}{MRC}{maximal-ratio combining}
\newacronym{nack}{NAK}{negative acknowledgement}
\newacronym{wimax}{WiMax}{worldwide interoperability for microwave access}
\newacronym{3gpp}{3GPP}{3rd generation partnership project}
\newacronym{ofdm}{OFDM}{orthogonal frequency-division multiplexing}
\newacronym{ofdma}{OFDMA}{orthogonal frequency-division multiple access}
\newacronym{wlan}{WLAN}{wireless local area network}
\newacronym{mmse}{MMSE}{minimum mean-square-error}
\newacronym{gsm}{GSM}{global system for mobile communications}
\newacronym{edge}{EDGE}{enhanced data \gls{gsm} environment}
\newacronym{stbc}{STBC}{space-time block code}
\newacronym{amc}{AMC}{adaptive modulation and coding}
\newacronym{snr}{SNR}{signal-to-noise ratio}
\newacronym{sinr}{SINR}{signal to interference and noise ratio}
\newacronym{mi}{MI}{mutual information}
\newacronym{acmi}{ACMI}{accumulated mutual information}
\newacronym{nacmi}{NACMI}{normalized ACMI}
\newacronym{cdi}{CDI}{channel distribution information}
\newacronym{latr}{LATR}{long-term average transmission rate}
\newacronym{rtr}{RTR}{round transmission rate}
\newacronym{pomdp}{POMDP}{Partially Observable Markov Decision Process}
\newacronym{fd}{FD}{full-duplex}
\newacronym{hd}{HD}{half-duplex}
\newacronym{td}{TD}{Time Division}
\newacronym{tdma}{TDMA}{time division multiple access}
\newacronym{mac}{MAC}{Media Access Control}
\newacronym{uwb}{UWB}{Ultra Wideband}
\newacronym{ieee}{IEEE}{institute of electrical and electronics engineers}
\newacronym{dB}{dB}{decibel}
\newacronym{cdf}{CDF}{cumulative density function}
\newacronym{ccdf}{CCDF}{complementary cumulative density function}
\newacronym{pdf}{PDF}{probability density function}
\newacronym{pmf}{PMF}{probability mass function}
\newacronym{min}{Min.}{minimum}
\newacronym{med}{Med.}{median}
\newacronym{avg}{Avg.}{average}
\newacronym{ul}{UL}{up-link}
\newacronym{dl}{DL}{downlink}
\newacronym{app}{APP}{a-posteriori probability}
\newacronym{logmap}{LogMAP}{log maximum a-posteriori}
\newacronym{llr}{LLR}{log-likelihood ratio}
\newacronym{ue}{UE}{user equipment}
\newacronym{qos}{QoS}{quality of service}
\newacronym{5g}{5G}{5\textsuperscript{th} generation mobile networks}
\newacronym{4g}{4G}{4\textsuperscript{th} generation mobile networks}
\newacronym{tti}{TTI}{transmission time interval}
\newacronym{rrm}{RRM}{radio resource management}
\newacronym{mmib}{MMIB}{mean mutual information per bit}
\newacronym{dsi}{DSI}{decoder state information}
\newacronym{tb}{TB}{transport block}
\newacronym{tbs}{TBS}{transport block size}
\newacronym{cb}{CB}{code block}
\newacronym{cbg}{CBG}{code block group}
\newacronym{cbs}{CBS}{code block size}
\newacronym{prb}{PRB}{physical resource block}
\newacronym{rb}{RB}{resource block}
\newacronym{bler}{BLER}{block error rate}
\newacronym{blep}{BLEP}{block error probability}
\newacronym{crc}{CRC}{cyclic redundancy check}
\newacronym{tdd}{TDD}{time division duplexing}
\newacronym{fdd}{FDD}{frequency division duplex}
\newacronym{embb}{eMBB}{enhanced mobile broadband}
\newacronym{mcc}{MCC}{mission critical communication}
\newacronym{mmc}{MMC}{massive machine communication}
\newacronym{mtc}{MTC}{machine type of communication}
\newacronym{mmtc}{mMTC}{massive machine type of communication}
\newacronym{umtc}{uMTC}{ultra-reliable \gls{mtc}}
\newacronym{urllc}{URLLC}{Ultra-reliable low-latency communications}
\newacronym{rtt}{RTT}{round trip time}
\newacronym{rs}{RS}{reference symbols}
\newacronym{kpi}{KPI}{key performance indicator}
\newacronym{kpis}{KPIs}{key performance indicators}
\newacronym{tx}{Tx}{transmitter node}
\newacronym{rx}{Rx}{receiver node}
\newacronym{cran}{C-RAN}{centralized radio access network}
\newacronym{rru}{RRU}{remote radio unit}
\newacronym{bbu}{BBU}{baseband unit}
\newacronym{fhd}{FHD}{fronthaul delay}
\newacronym{cch}{CCH}{control channel}
\newacronym{saw}{SAW}{stop-and-wait}
\newacronym{qci}{QCI}{\gls{qos} class identifier}
\newacronym{gbr}{GBR}{guaranteed bit rate}
\newacronym{mbr}{MBR}{maximum bit rate}
\newacronym{ngbr}{non-GBR}{non-\gls{gbr}}
\newacronym{arp}{ARP}{allocation and retention priority}
\newacronym{effcr}{ECR}{effective coding rate}
\newacronym{mcs}{MCS}{modulation and coding scheme}
\newacronym{eva}{EVA}{extended vehicular A}
\newacronym{epa}{EPA}{extended pedestrian A}
\newacronym{etu}{ETU}{extended typical urban}
\newacronym{re}{RE}{resource element}
\newacronym{reS}{REs}{resource elements}
\newacronym{nr}{NR}{new radio}
\newacronym{qpsk}{QPSK}{quadrature phase shift keying}
\newacronym{qam}{QAM}{quadrature amplitude modulation}
\newacronym{miso}{MISO}{multiple-input single-output}
\newacronym{mimo}{MIMO}{multiple-input multiple-output}
\newacronym{bs}{BS}{base station}
\newacronym{phy}{PHY}{physical layer}
\newacronym{rlc}{RLC}{radio link control}
\newacronym{bcfsaw}{BCF-SAW}{BCF-SAW}
\newacronym{bcf}{BCF}{backwards composite feedback}
\newacronym{bac}{BAC}{binary asymmetric channel}
\newacronym{bsc}{BSC}{binary symmetric channel}
\newacronym{dtx}{DTX}{discontinued transmission}
\newacronym{bpsk}{BPSK}{binary phase shift keying}
\newacronym{bep}{BEP}{bit error probability}
\newacronym{ndi}{NDI}{new data indicator}
\newacronym{dci}{DCI}{downlink control information}
\newacronym{csit}{CSIT}{channel state information at the transmitter}
\newacronym{lt}{LT}{loudest talker}
\newacronym{ct}{CT}{cooperative transmission}
\newacronym{bpcu}{bpcu}{bits per channel use}



\definecolor{clr_my_dgray}{rgb}{0.4, 0.4, 0.4}
\definecolor{clr_steel_yello}{RGB}{239, 174, 24}
\definecolor{clr_steel_blue}{RGB}{0, 82, 156}
\definecolor{clr_steel_red}{RGB}{199, 3, 45}


\makeatletter
\def\@IEEEsectpunct{:\ \,}
\def\paragraph{\@startsection{paragraph}{4}{\z@}{1.5ex plus 1.5ex minus 0.5ex}%
{0ex}{\normalfont\normalsize\bfseries}}
\makeatother



\newcommand{\given}{\;\middle|\;}

\newcommand*{\medcup}{\mathbin{\scalebox{1.3}{\ensuremath{\cup}}}}
\newcommand{\diag}[1]{\text{diag}\left(#1\right)}

\newcommand{\prob}[1]{\mathbb{P}\left[#1\right]}

\newcommand{\indic}[1]{\mathbf{1}_{\scriptstyle #1}}
\newcommand{\indicof}[1]{\mathbf{1}\,\left\{ #1 \right\}}
\newcommand{\indictwo}[2]{\mathbf{1}_{\scriptstyle #1}\left(#2\right)}

\newcommand{\expect}[1]{\mathbb{E}\left[#1\right]}
\newcommand{\expectw}[2]{\mathbb{E}_{#1}\left[#2\right]}



\newcommand{\lapvarof}[2]{\mathcal{L}\left[#1\right]\left(#2\right)}

\newcommand{\expof}[1]{\exp\left(#1\right)}
\newcommand{\lnof}[1]{\ln\left(#1\right)}
\newcommand{\logof}[1]{\log\left(#1\right)}

\newcommand{\Qmof}[2]{Q_{#2}\left(#1\right)}


\DeclarePairedDelimiter{\norm}{\lVert}{\rVert}

\newcommand{\mb}[1]{\mathbf{#1}}

\newcommand{\tbf}[1]{\widetilde{\mb{#1}}}
\newcommand{\hbf}[1]{\widehat{\mb{#1}}}

\newcommand{\pinf}{{+\infty}}
\newcommand{\ninf}{{-\infty}}
\newcommand{\diff}{\mathrm{d}}

\theoremstyle{definition} 
\newtheorem{proposition}{Proposition}


\newcolumntype{C}{>{\(\displaystyle}c<{\)}@{}} 
\newcolumntype{L}{>{\(\displaystyle}l<{\)}@{}} 
\newcolumntype{R}{>{\(\displaystyle}r<{\)}@{}}


\newcommand{\secref}[1]{Sec.~\ref{#1}}
\newcommand{\figref}[1]{Fig.~\ref{#1}}
\newcommand{\propref}[1]{Prop.~\ref{#1}}
\newcommand{\corref}[1]{Cor.~\ref{#1}}
\newcommand{\tabref}[1]{Table~\ref{#1}}



\newcommand{\plos}{p_{\text{L}}}

\newcommand{\alos}{\alpha_{\text{L}}}
\newcommand{\anlos}{\alpha_{\text{N}}}
\newcommand{\clos}{c_{\text{L}}}
\newcommand{\cnlos}{c_{\text{N}}}


\newcommand{\hhat}{{\hat{h}}}
\newcommand{\htld}{{\tilde{h}}}
\newcommand{\bfh}{{\mathbf{h}}}
\newcommand{\bfhhat}{{\hbf{h}}}
\newcommand{\bfhtld}{{\tbf{h}}}



\newcommand{\evte}{\textsf{TE}}
\newcommand{\evto}{\textsf{TO}}
\newcommand{\evso}{\textsf{SO}}
\newcommand{\evdo}{\textsf{DF}}

\allowdisplaybreaks

\begin{document}

\title{Outage of Periodic Downlink Wireless Networks with Hard Deadlines}
\author{Rebal Jurdi, Saeed R. Khosravirad, Harish Viswanathan, Jeffrey G. Andrews, and Robert W. Heath Jr.
\thanks{Rebal Jurdi (rebal@utexas.edu), Jeffrey G. Andrews (jandrews@ece.utexas.edu), and Robert W. Heath Jr. (rheath@ece.utexas.edu) are with the Wireless Networking and Communications  Group, The University of Texas at Austin.}
\thanks{Saeed R. Khosravirad (saeed.khosravirad@nokia-bell-labs.com) and Harish Viswanathan (harish.viswanathan@nokia-bell-labs.com) are with Nokia - Bell Labs, Murray Hill, NJ.}
\thanks{This work was partially supported by the National Science Foundation under Grant No. NSF-CCF-1514275.}}
\maketitle
\vspace{-1.5cm}
\begin{abstract}
We consider a downlink periodic wireless communications system where multiple \glspl{ap} cooperatively transmit packets to a number of devices, e.g. actuators in an industrial control system.  Each period consists of two phases: an uplink training phase and a downlink data transmission phase. Each actuator must successfully receive its unique packet within a single transmission phase, else an outage is declared.  Such an outage can be caused by two events: a \emph{transmission error} due to transmission at a rate that the channel cannot actually support or \emph{time overflow}, where the downlink data phase is too short given the channel conditions to successfully communicate all the packets. We determine closed-form expressions for the probability of time overflow when there are just two field devices, as well as the probability of transmission error for an arbitrary number of devices. Also, we provide upper and lower bounds on the time overflow probability for an arbitrary number of devices.  We propose a novel variable-rate transmission method that eliminates time overflow.  Detailed system-level simulations are used to identify system design guidelines, such as the optimal amount of uplink training time, as well as for benchmarking the proposed system design versus non-cooperative cellular, cooperative fixed-rate, and cooperative relaying.
\end{abstract}

\begin{IEEEkeywords}
Ultra-reliability, low-latency, mission-critical, machine-to-machine, industrial, IoT, 5G, URLLC, mMTC.
\end{IEEEkeywords}
\section{Introduction}

5G will be the first generation of cellular networks to support mission-critical applications with strict requirements for latency, reliability, and availability \cite{simsek2016,shafi2017}. \gls{urllc}, a work item in 3GPP Release 16, supports such applications in healthcare, automotive, industrial and several other verticals. \Glspl{icn} that perform distributed control, automated scheduling, and predictive maintenance, demand firm reliability and latency requirements. Cables used in these networks are costly, bulky, and obstructive \cite{brooks2001}, which makes wireless connectivity desirable. Unfortunately, existing commercial wireless technologies, such as Bluetooth (IEEE 802.15.1), ZigBee (IEEE 802.15.4), WiFi (IEEE 802.11), and cellular (3GPP Rel-16) fail to provide the needed low latency and high reliability. Industrial systems could either adopt the \gls{urllc} solution, revamp the physical or MAC layers of IEEE 802 protocols to support high reliability and low latency \cite{willig2005}, or implement a clean-slate protocol inspired by legacy wired \glspl{icn}. In this paper, we propose and analyze a transmission method that could be part of a clean-slate wireless \gls{icn}. 


Three characterizing features of \glspl{icn} are periodicity, determinism, and the use of controlled, conflict-free access mechanisms. First, many industrial networks are designed around the periodic transmission of data frames \cite{schickhuber1997}. In general, control traffic has strong periodic patterns \cite{barbosa2012}, as sensor information is retrieved by periodically polling the sensors, and actuators states are periodically updated \cite{thomesse2005}. Second, \glspl{icn} are deterministic, i.e. they are designed to communicate data in a guaranteed time frame \cite{moyne2007}. In real-time control, data must be received by the actuators within a specified time window to meet not only the performance guarantees of control loops, but the also the guarantees of safety functions \cite{moyne2007}. Last, these networks use controlled, conflict-free access mechanisms that are more suitable for periodic traffic than a random-access mechanism. Often, simple token passing or a time slot organization is used. A clean-slate wireless protocol for \glspl{icn} should be tailored around these three defining features.

It is well known that diversity transmission is very effective in achieving high reliability when communicating over fading channels. Typically, time, frequency and spatial degrees of freedom may be exploited to achieve diversity. Apart from time and frequency diversity  which require relying on nature and hence are not reliable sources of diversity gain, 
spatial diversity may be exploited through multiple transmission points or antennas. For low-cost deployments, it is desirable to have a small number of spatially distributed \glspl{ap} with limited number of antennas. Thus it is necessary to achieve spatial diversity through multi-user diversity exploiting the fact that several users are part of the communication system. 
In this paper, we propose to exploit multi-user diversity through feedback. If the transmission spectral efficiency for each user is adjusted according to their channel, then  the total transmission time to send all the packets is dependent on the channels of all the users and benefits from multi-user diversity. In this paper, we explore achieving ultra reliability through this form of multi-user diversity and show that such an approach can be superior to cooperative relaying.

\subsection{Contributions}
The objective of this paper is to model and study the outage (or failure) of an industrial control system that implements a communications mechanism with periodic transmissions, a \gls{tdm} scheduling mechanism, and a strict interpretation of hard-deadline violations as system failures with certainty. Our contributions are summarized as follows.
\begin{itemize}
\item We model a wireless communications method for the transmission of packets of commands from a central controller to a number of actuators through multiple cooperating \glspl{ap}, as might be used in \gls{comp} \cite{randa:2012,randa:2010,amitava:2010} or \glspl{das} \cite{wang:2013,zhang:2008,zhuang:2003}. The system uses \gls{tdd} and operates in two phases: an uplink training phase where the channel state is determined, and a downlink data phase where the commands are dispatched. The transmissions occur periodically assuming a \gls{tdm} scheduling mechanism with variable-length time slots. Additionally, the control commands are to be received by hard deadlines of the cycle boundaries. Failure to deliver these commands by the hard deadlines causes an outage of the industrial system (see \secref{out} for details). We believe that our work is the first to model a periodic, time-orthogonal, multi-user communications method with hard deadlines. 

\item We identify and analyze the events causing a \textit{system outage}, or \textit{outage} for short. The violation of packet delivery by a hard deadline is one source of outage; we refer to this event as \textit{time overflow}. A second cause of outage is the use of transmission rates that is too high to be supported by the wireless channels between the actuators and the \glspl{ap}; we refer to this event as \textit{transmission error}. We analyze the probability of the two events. We find closed-form expressions for the probability of transmission error, a closed-form expression for the probability of time overflow for two users, and upper and lower bounds for the probability of time overflow for an arbitrary number of actuators. We modify the original method to eliminate time overflow and thus reduce the probability of outage.

\item We simulate such a system for a large number of actuators that are arbitrarily scattered  on a factory floor to determine the probability of outage, transmission error and time overflow. We compare the outage probability to the one obtained for benchmark  methods such as cellular, one-shot fixed-rate transmission and two-hop cooperative-relay transmission. Our results show that the variable-rate method outperforms the two benchmarks for a wide range of target throughput.
\end{itemize}

\subsection{Related Work}
We frame our work in the context of analytical approaches to low-latency wireless \glspl{icn} and to \gls{urllc} in general.
Most of the existing analytical approaches to low-latency wireless \glspl{icn} assume a system with spatial or cooperative diversity \cite{willig2008}. 
Different relaying algorithms that use Luby coding were devised in \cite{girs2013} and compared with traditional \gls{arq} and mesh techniques.
A wireless broadcast technique that uses low-rate coding and semi-fixed resource allocation was introduced in \cite{weiner2014}. 
A suite of wireless communications protocols that exploit cooperative diversity through numerous relays and simultaneous retransmissions was introduced in \cite{swamy2015} and later developed and analyzed in subsequent work.
A wireless token-passing protocol that also exploits cooperative diversity was proposed in \cite{dombrowski2015}.
An energy efficient broadcast method for industrial wireless sensor networks was proposed in \cite{chen2017}.
A time-frequency slotted random access uplink method with retransmissions was suggested in \cite{derya17}. 
Finally, the performance of a wireless communications system with imperfect \gls{csi} and finite-length coding was studied in \cite{schiessl2017}. The trade-off between the length of the pilot sequence and that of the data codeword length is characterized in a queuing framework on top of \gls{phy} models, and a rate adaptation strategy is devised. 
Most of this work \cite{willig2008,girs2013,weiner2014,swamy2015,dombrowski2015,chen2017,derya17,schiessl2017} does not account for the difference in channel conditions across different devices, resulting in a conservative choice of transmission rate that caters to the device with the worst channel. Since the actuating devices are scattered around the factory floor, we use variable-length transmission slots to allocate airtime to each device depending on its instantaneous channel state.

Latency and reliability have also been addressed through different analytical frameworks in recent work on \gls{mmwave} massive MIMO networks \cite{vu2017}, \gls{mmwave} dense networks \cite{yang2017}, mobile edge computing \cite{liu2017}, content distribution \cite{huang2017}, and vehicular ad-hoc networks \cite{golnarian2016}. 
In the queuing framework, the average or probabilistic delay is either the object of minimization, or a constraint on optimizing a utility function such as throughput and energy consumption (see \cite{vu2017,huang2017,liu2017}). In the case of a probabilistic delay constraint, the probability of violation of a target delay is constrained by a target reliability parameter.
In the more relevant context of multiple access and broadcast channels (MACs and BCs), a more strict notion of delay constrains the optimization of a utility: all transmitters have a limited number of blocks or slots to deliver their messages in the case of MACs, and all receivers have a limited number of slots to receive their messages in the case of BCs (see \cite{negi2002,tuninetti2001} and \cite{swamy2015} and subsequent work). While prior work \cite{vu2017,yang2017,liu2017,huang2017,golnarian2016,negi2002,tuninetti2001,swamy2015} is on various aspects of low latency and high reliability, the proposed models are unsuited to analyze outages for multiple users at the level of the wireless links and do not harness the periodicity of \glspl{icn} and their controlled medium access protocols.


The rest of the paper is organized as follows. \secref{sys} describes the system model and details the variable-rate communications method. \secref{out} defines the outage events and analyzes their probability. \secref{num} presents numerical results and provides performance insights. Finally, \secref{conc} concludes the paper.

\section{System Model}\label{sys}
\newcommand{\sige}{{\sigma_\text{E}}}
\newcommand{\sigesq}{{\sigma^2_\text{E}}}
\newcommand{\sigefour}{{\sigma^4_\text{E}}}
\newcommand{\varhh}{{\left(1-\sigesq\right)}}
\newcommand{\ctep}{{\frac{2-\sige^2}{\sigesq(1-\sigesq)}}}
\newcommand{\mytd}{{T_\text{D}}}
\newcommand{\mytp}{{T_\text{P}}}
\newcommand{\myts}{{T_\text{S}}}
\newcommand{\mythat}{{\hat{T}_\text{D}}}

In this section, we describe the communications system, and the highlight the main system assumptions we use in our analysis.

\paragraph*{Setup} A controller is wired to $A$ fully-synchronized transmitters/\glspl{ap}. The \glspl{ap} communicate wirelessly with $D\gg A$ field devices scattered on a factory floor. All \glspl{ap} coordinate their transmissions to every device, as the \gls{csi} is shared. These \glspl{ap} behave in a similar way to a cooperating set in \gls{comp} or to transceivers in \gls{das}.

\paragraph*{Resources} Every device expects $B$ bits of data to be delivered every $T$ seconds over a bandwidth of $W$ Hertz. The system uses \gls{tdd} and operates in cycles of period $T$, split into an uplink {training phase of duration $\mytp$ followed by a downlink data phase of duration $\mytd$, $T=\mytp+\mytd$. 

\paragraph*{Channel} Wireless channels linking every \gls{ap}-device pair are assumed to undergo independent Rayleigh fading that is frequency flat, permitting the use of single-carrier modulation without the need of equalization or sequence detection at the receiver. The symbol time is taken to be $\myts=1/W$. While measurement campaigns inside factories indicate that the wireless channel is frequency-selective especially when a wide bandwidth is used \cite{pahlavan89,rappaport91}, we maintain the frequency-flatness assumption for the purpose of analytical tractability. Analysis under frequency-selectivity is deferred to future work.

\paragraph*{Fading dynamics} We assume a quasi-static setting where each packet experiences a single fading value. This setting arises whenever the packets are small enough relative to the fading coherence, in time and frequency, for the fading to remain approximately constant over each packet. The appropriate metric to quantify transmission errors in a quasi-static setting is the outage probability. Additionally, error symbol and packet error probabilities can be identified with the outage probability \cite{lozano2012}, but the latter is more general as it does not depend on the modulation.

\paragraph*{Channel estimation} In the training phase, channel sounding is performed by the \glspl{ap}. The devices sequentially transmit uplink pilot sequences of length $L$, leading to a total duration of $\mytp = L \myts$. For the purpose of analysis, we assume that the devices use the same transmit power $P_\text{T}$ as that used by the \glspl{ap}. Upon receiving the pilot sequences, the \glspl{ap} perform channel estimation using \gls{mmse} for all $D$ channels, one channel at a time. Let $N_0$ be the thermal noise power spectral density, $r$ the transmitter-receiver separation, $\ell(r)$ the signal path loss, and $\rho=P_\text{T}\ell(r)/N_0W$ the average received \gls{snr}. The average received \gls{snr} is averaged with respect to the fading distribution, as the transmit power and path loss are assumed to be known and fixed. For a Rayleigh fading channel, \gls{mmse} channel estimation allows expressing the true channel $h$ as a sum of the channel estimate $\hat{h}$ and the estimation error $\htld$, i.e. $h=\hat{h}+\htld$, where $\htld\sim\mathcal{CN}(0,\sigesq(L,\;\rho))$, $\hat{h}\sim\mathcal{CN}(0,1-\sigesq(L,\;\rho))$, and $\sigesq(L,\;\rho)$ denotes the variance of the channel estimation error \cite{hassibi03,yoo06} given as
\begin{align}\label{eq:mmse}
\sigesq(L,\;\rho)=\frac{1}{1+\rho L}.
\end{align}
To simplify our analysis, we assume that the average \gls{snr} at the receiver is equal to the average \gls{snr} at the transmitter. This is further justified by three facts. First, channel reciprocity holds since \gls{tdd} is used. Second, the \glspl{ap} and devices transmit at the same power. Third, noise at the detection stage is limited to thermal noise that is assumed to have the same power spectral density across all devices and \glspl{ap}. Moreover, we assume that $\rho$ is known by both the \gls{ap} and device.

\paragraph*{Rate selection} Upon channel estimation, the \glspl{ap} share the channel estimates among one another and jointly pick an appropriate transmission rate for every device. Let the scalars $h_d^{(a)}$ and $\hhat_d^{(a)}$ denote the channel between the $a$th \gls{ap} and the $d$th device, and its estimate as computed by the $a$th \gls{ap}. Let the vector $\bfh_d$ denote the channel between the $d$th device and the $A$ \glspl{ap}, i.e. $\bfh_d=[h_d^{(1)} \;\; h_d^{(2)} \;\; \cdots h_d^{(A)}]^{\sf{T}}$, and $\bfhhat_d$ denote its estimate. Let $\rho_d^{(a)}$ be the average \gls{snr} for the pair consisting of the $d$th device and $a$th \gls{ap}, and $\mathbf{G}_d=\diag{\rho_d^{(1)},\rho_d^{(2)},\dots,\rho_d^{(A)}}$. Let $R_d$ be the rate chosen by the \glspl{ap} to transmit to the $d$th device. Let $T_d$ be the airtime given to the $d$th device, i.e. the time required to transmit its data, and $\mythat=\sum_{d=1}^D T_d$ be the aggregate transmission duration. \tabref{t:summary} contains a summary of the notation. The \glspl{ap} take the channel estimate $\bfhhat_d$ at face value, as in conventional adaptive modulation \cite{goldsmith:1997}, and use transmit beamforming to transmit at a rate
\begin{align}\label{eq:rate}
R_d = W\log\left(1+\bfhhat_d^* \mathbf{G}_d \bfhhat_d \right).
\end{align}
The rate $R_d$ is equivalent to the Shannon capacity of an AWGN channel with a \gls{snr} of $\bfhhat_d^* \mathbf{G}_d \bfhhat_d$ and a bandwidth $W$. This particular choice of rate is motivated by the fact that the spectral efficiency envelope of practical \glspl{mcs} has a small spread against the Shannon capacity, which is a smooth function of \gls{snr} that provides an analytical edge.

We make the following remarks:
\begin{itemize}
\item Transmission rates are chosen dynamically, i.e. upon receiving the training sequence and performing channel estimation.
\item Transmission rates vary across devices because every device experiences a unique channel. Additionally, transmission rates are independent random variables due to the independent channel fading assumption.
\item Airtimes vary across devices because they are a function of transmission rates which across devices. Additionally, airtimes are independent random variables.
\item Transmission rates and airtimes vary across cycles due to the quasi-static setting assumption.
\end{itemize}

\paragraph*{Blocklength} Although we assumed that packets are small enough so that fading appears constant, we assume that the packets are large enough so that we can perform our analysis in the \gls{ibl} regime. We defer the analysis in the \gls{fbl} \cite{polyanskiy2010} to future work.

\begin{table*}[t!]
\centering
\caption{Summary of notation.}
\label{t:summary}
\begin{tabularx}{\textwidth}{|c|X|} 
\hline
\textbf{Notation} 			& \textbf{Description} \\
\hline
$A;\;D$						& Total number of transmitting \glspl{ap}; total number of receiving devices.					\\
\hline
$B;\;L$						& Payload size per device, in bits; number of uplink pilots per device. 						\\
\hline
$W;\;\myts$					& Available bandwidth; symbol period. We have $\myts = 1/W$. 									\\
\hline
$T;\;\mytp;\;\mytd$ 		& Cycle duration; training phase duration; downlink phase duration. We have $T=\mytp+\mytd$.	\\
\hline
$h_d^{(a)};\hat{h}_d^{(a)}$ 		& Channel (fading) between the $a$th \gls{ap} and $d$th device; its \gls{mmse} estimate.  		\\
\hline
$\bfh_d;\bfhhat_d;\bfhtld_d$			& Channel vector between the $A$ \glspl{ap} and the $d$th device; its \gls{mmse} estimate; its channel estimation error. 		\\
\hline
$\rho_d^{(a)};\;\rho_d$   		& Average received \gls{snr} of the link between the $a$th \gls{ap} and the $d$th device; average received \gls{snr} of the link between the only \gls{ap} and the $d$th device (when $A=1$). \\
\hline 
$\mathbf{G}_d$				& Diagonal matrix of average \gls{snr} values. We have $\mathbf{G}_d=\diag{\rho_d^{(1)},\rho_d^{(2)},\dots,\rho_d^{(A)}}$. \\
\hline
$C_d;\;R_d;\;T_d$ 			& Mutual information of the channel between the \gls{ap} and the $d$th device; rate chosen by the \glspl{ap} to transmit to the $d$th device; the airtime (transmission duration) given to the 
$d$th device, where $T_d=B/R_d$. \\
\hline
$\mythat$ 					& Total required airtime, i.e. the total duration needed to transmit data to all devices. \\
\hline
$\beta$ 					& Backoff parameter. \\
\hline
$Y_d$						& Scaled airtime $T_d$ used for analytical convenience. It is equivalent to the time needed to transmit one bit in one Hertz. \\
\hline
$G_d;\;\overline{G}_D $ 				& Measurement \gls{nsr} (when $A=1$), $G_d=1/\rho_d(1-\sigesq)$; arithmetic mean of measurement \glspl{nsr}. \\
\hline
$\evdo$						& The event of \textit{device failure}. This occurs when an arbitrary device fails to receive its intended data due to the use of a rate that is unsupported by the channel. \\
\hline
$\evte$ 					& The event of \textit{transmission error}. This occurs when at least one device fails to receive its intended data. \\
\hline
$\evto$ 					& The event of \textit{time overflow}. This occurs when the total required airtime exceeds the downlink time budget. \\
\hline
$\evso$ 					& The event of outage of the industrial system. This is contingent on either $\evte$ or $\evto$. \\
\hline
\end{tabularx}
\end{table*}

\section{Outage Analysis}\label{out}
In this section, we characterize and then analyze the outage probability. The industrial system is said to be in outage when at least one device fails decoding its intended data, or the sum of all transmission durations exceed the allotted downlink time budget $\mytd$.

\paragraph{Transmission error (\evte)} The probability of error experienced by the receiving device is bounded away from zero when the rate chosen by the transmitting \glspl{ap} exceeds the input-output mutual information. In such a scenario, the device decodes its data incorrectly and thus fails to receive the command that is intended by the controller. We say that a transmission error occurs if at least one device decodes its data incorrectly. While this seems restrictive, consider a manufacturing process consisting of numerous workcells (layers of a process) where mechanical devices (e.g. robot arms, pickers, and forklifts) must work harmoniously. A single device failure could jeopardize the operation of the entire process as failures in local workcells can easily cascade to other workcells \cite{nist800}. In the quasi-static setting, the probability of device failure can be written in terms of the outage probability, which is the probability that the transmission rate exceeds the mutual information of the \glspl{ap} and an arbitrary device. Under a per-\gls{ap} power constraint and when transmit beamforming is used, the input-output mutual information is given as in \cite[eq. 11 ]{wang:2013}
\begin{align}
C = W\log\left(1+\bfh^* \bf{G} \, \bfh \right),
\end{align}
where $\mathbf{G}=\diag{\rho^{(1)},\rho^{(1)},\dots,\rho^{(A)}}$, and the probability of device failure corresponding to a selected rate $R$ based on the channel estimate is
\begin{align}\label{eq:evdo}
\prob{\evdo} = \prob{R>C}.
\end{align}
The probability of transmission error is expressed in terms of the probability of device failure as
\begin{align}\label{eq:df_to_te}
\prob{\evte} 
= \prob{\bigcup_{d=1}^D\left\{R_d>C_d\right\}} 
= 1-\prod_{d=1}^D \left(1-\prob{R_d>C_d}\right),
\end{align}
due to independent channel fades.

\paragraph{Time overflow (\evto)}
In many \gls{rt} applications, actuators must change their states isosynchronously to ensure a smooth process \cite{neumann2007}. Translating this requirement into our model, all data must be decoded and all instructions applied simultaneously at cycle edges. Otherwise, the industrial system is prone to outage. The communication method that we have modeled does not guarantee that all devices' instructions are delivered within the same cycle. If enough actuators are in deep fade or if path loss is too high, the total airtime might overflow into a subsequent cycle, i.e. the aggregate transmission duration $\mythat$ might exceed the downlink budget $\mytd$. We accordingly define the probability of time overflow as
\begin{align}
\prob{\evto} = \prob{\mythat > \mytd} = \prob{\sum_{d=1}^D \frac{B}{R_d} > \mytd}.
\end{align}

\paragraph{System Outage (\evso)}
Combining the above two events, we can  express the outage probability as
\begin{align}\label{eq:evso}
\prob{\evso} 
= \prob{\evte \; \medcup \; \evto} 
= \prob{\bigcup_{d=1}^D \left\{R_d > C_d\right\}\;\medcup\;\left\{\sum_{d=1}^D \frac{B}{R_d} > \mytd\right\}}.
\end{align}
The outage probability is difficult to calculate because the events $\evte$ and $\evto$ are not independent. We analyze $\prob{\evte}$ and $\prob{\evto}$ separately and compute $\prob{\evso}$ using Monte Carlo simulation. 

\subsection{Transmission Error}\label{ss:te}
Calculating the probability of device failure $\prob{\evdo}$ is a prerequisite for calculating the probability of transmission error $\prob{\evte}$ (see \eqref{eq:df_to_te}). Calculating $\prob{\evdo}$ requires averaging the function $\indic{R>C}$ over the joint density of $\bfhtld$ and $\bfhhat$. Using the law of total expectation, and exploiting the fact that the channel estimate $\bfhhat$ and the estimation error $\bfhtld$ are independent, we first calculate the conditional probability of device failure
\begin{equation}
\prob{\evdo\given \bfhhat} =  \expectw{\bfhtld}{\indic{R>C}\given \bfhhat}
\end{equation}
then the probability of device failure is
\begin{equation}
\prob{\evdo} = \expectw{\bfhhat}{\prob{\evdo\given \bfhhat}}.
\end{equation}

The conditional probability $\prob{R>C\given \bfhhat}$ has been widely studied in the context of MISO and SIMO communication over Rayleigh channels with imperfect \gls{csi} \cite{xie:2005,shin:2008,zhang:2008}. For an arbitrary rate $R$, $\prob{R>C\given \bfhhat}$ reduces to evaluating the \gls{cdf} of a non-central Chi-squared distribution with $2A$ degrees of freedom, which is expressed as a Marcum $Q$-function of order $A$ as
\begin{equation}
Q_A(u,v) = \frac{1}{u^{A-1}}\int\limits_v^\pinf x^A \expof{-\frac{x^2+u^2}{2}}\;I_{A-1}(ux)\; \diff x,
\end{equation}
where $I_\nu(\cdot)$ is the modified Bessel function of the first kind and order $\nu$. The Marcum $Q$-function also appears in the error probability analysis of Rician channels, multichannel communications, and radar communications \cite{marcum:1960}. For $M=1$, we have
\begin{equation}
C=\logof{1+\rho|h|^2};\;\;\;R=\logof{1+\rho|\hhat|^2}.
\end{equation}
As a result, $\prob{R>C\given \bfhhat}$ is given similarly to \cite[eq.~(9)]{zhang:2008} as
\begin{equation}
1-\Qmof{\dfrac{\sqrt{2}}{\sige}|\hhat|,\dfrac{\sqrt{2}}{\sige}|\hhat|}{1}.
\end{equation}
Using the identity \cite{marcum}
\begin{equation}
Q_1(u,u) =\frac{1}{2}\left(1+e^{-u^2}I_0(u^2)\right),
\end{equation}
the conditional probability simplifies to
\begin{equation}
\prob{R>C\given \hhat} 
= \frac{1}{2}\left[1-e^{-\frac{2}{\sigesq}|\hhat|^2}\,I_0\left(\frac{2}{\sigesq}|\hhat|^2\right)\right].
\end{equation}
Now, we remove the conditioning on the channel estimate to determine the probability of device failure on average. 
\begin{proposition}[Device failure, $A=1$]\label{p:err_1r_1t}
For $A=1$. The probability of device failure is
\begin{equation}
\prob{R>C} 
=
\frac{1}{2} \left(1 - \frac{\sige}{\sqrt{4-3\sigesq}} \right).
\end{equation}
\end{proposition}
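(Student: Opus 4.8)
The plan is to obtain $\prob{R>C}$ by averaging the conditional expression
\begin{equation}
\prob{R>C\given \hhat} = \frac{1}{2}\left[1 - e^{-\frac{2}{\sigesq}|\hhat|^2}\, I_0\!\left(\frac{2}{\sigesq}|\hhat|^2\right)\right]
\end{equation}
over the distribution of the channel estimate. Since $\hhat\sim\mathcal{CN}(0,1-\sigesq)$, the random variable $X:=|\hhat|^2$ is exponentially distributed with mean $1-\sigesq$, i.e. has density $\tfrac{1}{1-\sigesq}e^{-x/(1-\sigesq)}$ on $x\ge 0$. Writing $a:=2/\sigesq$ and $\lambda:=1/(1-\sigesq)$, the law of total expectation reduces the whole computation to a single integral,
\begin{equation}
\prob{R>C} = \frac{1}{2}\left[1 - \lambda\int_0^\infty e^{-(a+\lambda)x}\, I_0(ax)\,\diff x\right].
\end{equation}

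Next I would evaluate this integral using the Laplace-transform identity $\int_0^\infty e^{-sx} I_0(bx)\,\diff x = (s^2-b^2)^{-1/2}$, which holds for $s>b\ge 0$. With $s=a+\lambda$ and $b=a$ the condition $s>b$ is automatic because $\lambda>0$, so the integral equals $\big((a+\lambda)^2-a^2\big)^{-1/2}=(\lambda^2+2a\lambda)^{-1/2}$, and therefore $\lambda\int_0^\infty e^{-(a+\lambda)x} I_0(ax)\,\diff x = \sqrt{\lambda/(\lambda+2a)}$.

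Finally I would substitute back $\lambda=1/(1-\sigesq)$ and $a=2/\sigesq$. Since $\lambda+2a = \dfrac{\sigesq+4(1-\sigesq)}{\sigesq(1-\sigesq)} = \dfrac{4-3\sigesq}{\sigesq(1-\sigesq)}$, one gets $\lambda/(\lambda+2a) = \sigesq/(4-3\sigesq)$, hence $\sqrt{\lambda/(\lambda+2a)} = \sige/\sqrt{4-3\sigesq}$, and plugging this in yields $\prob{R>C} = \tfrac{1}{2}\big(1-\sige/\sqrt{4-3\sigesq}\big)$, as claimed. The only step that is more than bookkeeping is recognizing the Bessel-function Laplace transform; the main thing to verify carefully for rigor is the convergence condition $s>b$ underlying it, which here is trivially met, so I do not anticipate any real obstacle beyond keeping the substitutions straight.
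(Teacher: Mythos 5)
Your proposal is correct and follows essentially the same route as the paper's own proof: average the conditional failure probability over the exponential distribution of $|\hhat|^2$ and evaluate the resulting integral via the Laplace transform $\int_0^\infty e^{-sx}I_0(bx)\,\diff x=(s^2-b^2)^{-1/2}$. Your parametrization in terms of $a$ and $\lambda$ and the explicit check of the convergence condition $s>b$ are just cleaner bookkeeping of the identical computation.
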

\begin{proof}
See Appendix~\ref{a:a}.	
\end{proof}

For $A>1$, the conditional probability of device failure is
\begin{equation*}
\prob{\sum_{a=1}^A \rho^{(a)} |h^{(a)}|^2 < 2^R-1 \given \hhat^{(1)},\dots,\hhat^{(A)}}.
\end{equation*}
The random variable $\sum_{a=1}^A \rho^{(a)} |h^{(a)}|^2=\bfh^* \bf{G} \, \bfh$ follows a quadratic form of Gaussian random variables which has a generalized non-central Chi-squared distribution. The expression of its \gls{cdf} is complicated, and often times the inversion formula is used to numerically invert its characteristic function \cite{davies:1980}. Because of the intractability of this distribution, we restrict our analysis to the case where $\rho^{(a)}=\rho$ and defer the results of the more general case to \secref{num}. The following proposition gives the probability of device failure for $A>1$.
\begin{proposition}[Device failure, $A>1$]\label{p:err_1r_Dt}
Consider now $A>1$ and $\rho^{(a)}=\rho$. The probability of device failure is
\newcommand{\rebaltw}{{\frac{\sigesq}{4-3\sigesq}}}
\newcommand{\rebalth}{{\frac{2-\sigesq}{\sqrt{4\sigefour-3\sigesq}}}}
\begin{align}
\begin{split}
\prob{R>C}
&= \frac{1}{2} - \frac{1}{2} \left[\rebaltw\right]^{\frac{A}{2}} \, \left|P_{A-1}\left(\rebalth\right)\right| \\
&- \sum_{a=1}^{A-1}  \frac{(A+a-1)!}{(A-1)!} \left[\rebaltw\right]^{\frac{A}{2}} \, \left|P_{A-1}^{-a}\left(\rebalth\right)\right|, 
\end{split}
\end{align}
where $P_\mu^{-\nu}$ is the associated Legendre function. 
\end{proposition}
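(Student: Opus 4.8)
The plan is to strip the conditioning on the channel estimate from the conditional device-failure probability and then evaluate the resulting one-dimensional expectation in closed form. First I would record the conditional probability for general $A$. With $\rho^{(a)}=\rho$ one has $\bfh^*\mathbf{G}\bfh=\rho\norm{\bfh}^2$ and $\bfhhat^*\mathbf{G}\bfhhat=\rho\norm{\bfhhat}^2$, so $\prob{R>C\given\bfhhat}=\prob{\norm{\bfh}^2<\norm{\bfhhat}^2\given\bfhhat}$. Conditioned on $\bfhhat$, the vector $\bfh\sim\mathcal{CN}(\bfhhat,\sigesq\mathbf{I}_A)$, so $2\norm{\bfh}^2/\sigesq$ is noncentral chi-squared with $2A$ degrees of freedom and noncentrality $2\norm{\bfhhat}^2/\sigesq$; evaluating its CDF at the point $2\norm{\bfhhat}^2/\sigesq$, which coincides with the noncentrality parameter, yields the order-$A$ analogue of the scalar identity, $\prob{R>C\given\bfhhat}=1-\Qmof{\tfrac{\sqrt 2}{\sige}\norm{\bfhhat},\ \tfrac{\sqrt 2}{\sige}\norm{\bfhhat}}{A}$. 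An equivalent and somewhat cleaner route, which is the one I would actually carry forward, conditions on the estimation error instead: writing $\norm{\bfh}^2-\norm{\bfhhat}^2=\norm{\bfhtld}^2+2\,\mathrm{Re}(\bfhhat^*\bfhtld)$ and using that $\mathrm{Re}(\bfhhat^*\bfhtld)$ given $\bfhtld$ is zero-mean Gaussian with variance $\tfrac{1-\sigesq}{2}\norm{\bfhtld}^2$ gives $\prob{R>C\given\bfhtld}=\tfrac12\,\erfcof{\norm{\bfhtld}/(2\sqrt{1-\sigesq})}$, which avoids the Marcum function.

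Next I would average. Since $\norm{\bfhtld}^2$ is a sum of $A$ i.i.d.\ exponentials of mean $\sigesq$, it is Gamma-distributed with shape $A$ and scale $\sigesq$. Inserting its density, substituting $\norm{\bfhtld}^2=\sigesq w$, and setting $\alpha=\tfrac{\sigesq}{4(1-\sigesq)}$ collapses everything to $\prob{R>C}=\tfrac{1}{2(A-1)!}\int_0^\infty w^{A-1}e^{-w}\,\erfcof{\sqrt{\alpha w}}\,\diff w$. Here $1+\alpha=\tfrac{4-3\sigesq}{4(1-\sigesq)}$ and $\tfrac{\alpha}{1+\alpha}=\tfrac{\sigesq}{4-3\sigesq}$, precisely the quantity that will appear raised to the power $A/2$ in the statement. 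As a consistency check, for $A=1$ the classical value $\int_0^\infty e^{-w}\erfcof{\sqrt{\alpha w}}\,\diff w=1-\sqrt{\alpha/(1+\alpha)}$ recovers \propref{p:err_1r_1t} exactly.

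Finally, it remains to evaluate the integral for integer $A\ge 2$ and to recognize it in Legendre form. I would differentiate under the integral sign with respect to $\alpha$ (using $\partial_\alpha\,\mathrm{erf}(\sqrt{\alpha w})=\tfrac{1}{\sqrt{\pi\alpha}}\sqrt w\,e^{-\alpha w}$), carry out the elementary Gamma integral that results, and integrate back from $\alpha=0$, where $\mathrm{erf}$ vanishes; after the substitution $\beta=\tan^2\theta$ the integral becomes $(A-1)!-\tfrac{2\Gamma(A+\frac12)}{\sqrt\pi}\int_0^{\arctan\sqrt\alpha}\cos^{2A-1}\theta\,\diff\theta$. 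The remaining angular integral is a terminating Gauss hypergeometric function (equivalently an incomplete Beta function with half-integer parameter), and invoking the standard connection formula between such terminating ${}_2F_1$'s and the associated Legendre functions $P_{A-1}^{-a}$ produces the finite sum in the statement, with the factor $\big[\sigesq/(4-3\sigesq)\big]^{A/2}=\sin^A(\arctan\sqrt\alpha)$ pulled out and the absolute values reflecting the argument at which the $P_{A-1}^{-a}$ are evaluated. This last identification — matching the degree-$(A-1)$ polynomial $\int_0^\phi\cos^{2A-1}\theta\,\diff\theta$ against the Legendre representation, and pinning down the coefficients $(A+a-1)!/(A-1)!$ and the argument of the Legendre functions exactly — is the main obstacle; everything preceding it is routine. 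A shortcut that avoids the hypergeometric bookkeeping is to read the integral $\int_0^\infty x^{\nu-1}e^{-px}\,\erfcof{c\sqrt x}\,\diff x$ directly off a table of integrals, where it is already listed in associated-Legendre form, leaving only the final algebraic simplification to the displayed shape.
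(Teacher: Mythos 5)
Your proposal is correct, but it takes a genuinely different route from the paper. The paper keeps the conditioning on $\bfhhat$, writes $\prob{R>C\given\bfhhat}=1-\Qmof{\tfrac{\sqrt{2}}{\sige}\norm{\bfhhat},\,\tfrac{\sqrt{2}}{\sige}\norm{\bfhhat}}{A}$, expands the order-$A$ Marcum $Q$-function at equal arguments into $\tfrac{1}{2}\left[1-e^{-w}I_0(w)\right]-e^{-w}\sum_{k=1}^{A-1}I_k(w)$ with $w=\tfrac{2}{\sigesq}\norm{\bfhhat}^2$, and then averages each term against the $\text{Gamma}\left(A,\tfrac{1}{1-\sigesq}\right)$ density of $\norm{\bfhhat}^2$; the resulting integrals are Laplace transforms $\mathcal{L}\left[u^{\mu}I_\nu(au)\right](p)$, which Bateman's tables express directly via $P_{A-1}^{-a}$, so the stated formula drops out term by term with the coefficients $(A+a-1)!/(A-1)!$ already in place. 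You instead swap the conditioning onto $\bfhtld$, observe that $\mathrm{Re}(\bfhhat^*\bfhtld)$ is conditionally Gaussian, and reduce the problem to the classical MRC-style average $\tfrac{1}{2(A-1)!}\int_0^\infty w^{A-1}e^{-w}\erfcof{\sqrt{\alpha w}}\,\diff w$ with $\alpha/(1+\alpha)=\sigesq/(4-3\sigesq)$; all of these steps check out, and your $A=1$ sanity check does recover \propref{p:err_1r_1t}. What your route buys is elementarity — no Bessel or Marcum functions, and a form that immediately exposes the algebraic nature of the answer (it is the standard finite-sum BEP-over-Rayleigh expression). What it costs is the final conversion of the incomplete-Beta/terminating-hypergeometric answer into the associated-Legendre form of the statement, which you correctly flag as the main obstacle but only sketch; in the paper this identification is free because the table entry is already written in terms of $P_\mu^{-\nu}$. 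Since both proofs ultimately rest on a table of integrals, the overall level of rigor is comparable, and your derivation of the integrand and of the key parameter $\sigesq/(4-3\sigesq)$ is strong evidence that the two expressions agree.
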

\begin{proof}
See Appendix~\ref{a:a}.	
\end{proof}
\noindent The probability of device failure in the case of multiple \glspl{ap} is an algebraic function that can be expressed in terms of a finite number of addition, multiplication, and square-root operations. For example, for $A=2$, 
\begin{align}
\prob{R>C}
= \frac{1}{2} - \frac{6\sigesq-5\sigefour}{\left(8-6\sige\right)\sqrt{4\sige-3\sigesq}}.
\end{align}
\figref{fig:do_vs_sigma} shows the probability of device failure as a function of the estimation error variance ($\sigesq$). We observe that the more \glspl{ap}, the faster the decay of the probability of device failure. Of particular interest, the higher the accuracy of estimation, the higher the chances of device failure. Since $\sigesq$ and $L$, the number of pilots per device, are inversely related, observation suggests that more training increases the chances of device failure. Consequently, the probability of transmission error converges to 1 as $N$ grows large. The reasoning is that channel estimate $\bfhhat$ becomes more accurate with increasing the training length. Moreover, the symmetric nature of the Normal distribution makes the events $\norm{\bfhhat}\leq\norm{\bfh}$ and $\norm{\bfhhat}>\norm{\bfh}$ equally likely with increased training. To circumvent this, i.e. to ensure that longer training leads to a lower transmission error probability, we introduce a multiplicative backoff parameter, $0\leq\beta\leq1$, from the rate in \eqref{eq:rate} so that it becomes
\begin{align}\label{eq:rate_bkoff}
R_d = \beta  W\log\left(1+\bfhhat_d^* \mathbf{G}_d \bfhhat_d \right).
\end{align}
A multiplicative, pre-$\log$ parameter has the following operational significance: for a bandwidth $W$, $\beta W$ is used to transmit information/data bits, and $(1-\beta)W$ is used to transmit control or error correction bits. Alternatives to the pre-log factor have been proposed in prior work. A \textit{utilization factor} that scales the \gls{snr} (inside the $\log$-function) was used for \textit{goodput} maximization in \cite{akoum:2011,zhang:2008,cao:2009,vakili:2006} as a parameter reflecting faithfulness in the quality of estimation. A \textit{scheduling backoff} parameter was used in the same way in \cite{kang:2014}. While a backoff parameter inside the $\log$-function indicates using a fraction of the available power, a backoff parameter outside the $\log$-function indicates using a fraction of the available time or bandwidth. In \secref{num}, we see that backoff optimization improves the outage probability for a practical range of payload size and training sequence length.

\begin{figure*}
\centering
\includegraphics[width=0.4\textwidth]{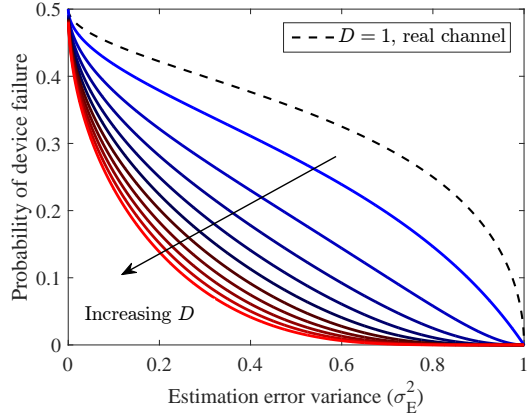}
\caption{Probability of device failure vs. estimation error variance ($\sigesq$) for $D=1,\dots,10$ and for the real Rayleigh channel with $D=1$ (i.e. when the channel takes real values).}
\label{fig:do_vs_sigma}
\vspace{-20pt}
\end{figure*}

\subsection{Time Overflow}\label{ss:to}
Characterizing the probability of time overflow is more challenging than characterizing. As a result, we provide the general expression and bounds in this section. We first introduce the random variable $Y_d=W/R_d$ as the time needed to transmit one bit in one Hertz. Note that $Y_d$ is proportional to the airtime given to the $d$th device which is $T_d=B/R_d$. We recast the probability of time overflow as
\begin{align}
\prob{\evto} = \prob{\sum_{d=1}^D Y_d > \frac{\mytd W}{B}}.
\end{align} 
Unlike the probability of transmission error, the probability of time overflow cannot be expressed in terms of the \glspl{cdf} of the random variables $\set{Y_d}$ as in \eqref{eq:df_to_te}, but rather through the convolution of their \glspl{pdf} as
\begin{align}
\begin{split}
\prob{\evto} 
&= \int\limits_{\mathbb{R}^D_+} \indictwo{\Delta^c}{\mb{y}}\; f_{\mathbf{Y}}(\mb{y}) \; \diff \mb{y} 
= \idotsint\limits_{\mathbb{R}^D_+ \setminus \Delta} f_{Y_1}(y_1)\cdots f_{Y_D}(y_D) \; \diff y_1 \dots \diff y_D,
\end{split}
\end{align}
where $f_{Y_n}$ is the \gls{pdf} of $Y_n$, $f_{\mathbf{Y}}$ the joint \gls{pdf}, and $\Delta$ the tetrahedron
\begin{equation*}
\Delta = \left\{(y_1,\dots,y_D)\in\mathbb{R}^D_+ \;:\; \sum_{d=1}^D y_d \leq \frac{\mytd W}{B} \right\}.
\end{equation*}
Since $\set{Y_d}$ have an infinite first moment, Markov-type inequalities, concentration inequalities, and the central limit theorem cannot be invoked to obtain bounds on the probability of time overflow. While characterizing $\prob{\evto}$ is difficult, the conditional distribution $\prob{\evto\given\set{\bfhhat_d}}$ can be computed as
\begin{align}
\prob{\evto\given\set{\bfhhat_d}_{d=1}^D} = \indicof{\sum_{d=1}^D \frac{1}{R_d} > \frac{\mytd}{B}\given\set{\bfhhat_d}_{d=1}^D},
\end{align}
where $\indicof{\cdot}$ is the indicator, since $R_d$ is a function of $\bfhhat_d$. In other words, one can simply obtain the values of $\set{R_d}$ from $\set{\bfhhat_d}$ and check if $\sum_d 1/R_d > \mytd/B$.

It is more convenient to determine $\prob{\evto}$ through $\prob{\sum_{d=1}^D Y_d \leq y}$, which gives the probability of \emph{time underflow} with respect to a time budget $y$. Note that the budget we are interested in is $y=\mytd W/B$. We first approach $\prob{\evto}$ for the simplest case: $D=2$. The next result gives the exact probability $\prob{Y_1+Y_2\leq y}$ for an arbitrary $y$ as well as a lower bound for this probability.  
\begin{proposition}[Time underflow, two devices]\label{p:to_2r}
Suppose that there is a single \gls{ap} transmitting to two devices that are at a nominal \gls{snr} $\rho$ and consider a deadline $y$. The probability of time underflow is
\begin{equation}\label{eq:to_2r}
\prob{Y_1+Y_2\leq y} 
= \frac{1}{\rho\varhh} \int\limits_{2^{1/y}}^\pinf
\expof{ -\frac{v+v^{1/y\log v - 1}-2}{\rho\varhh} } \, \diff v,
\end{equation}	
which can bounded from above and from below as
\begin{equation}\label{eq:to_2r_bnd}
\expof{-2\,\frac{2^{2/y}-1}{\rho\varhh}}
\leq  
\prob{Y_1+Y_2\leq y} 
\leq 
\expof{-2\,\frac{2^{1/y}-1}{\rho\varhh}}.
\end{equation}
\end{proposition}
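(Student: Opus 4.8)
The plan is to work with $Y_d = W/R_d$ where, for $A=1$, $R_d = W\log(1+\rho|\hhat_d|^2)$. Since the backoff parameter is introduced later, I will use $\beta=1$ here. Writing $Y_d = 1/\log(1+\rho|\hhat_d|^2)$, the event $\{Y_1+Y_2\le y\}$ is the event that $\log(1+\rho|\hhat_1|^2)$ and $\log(1+\rho|\hhat_2|^2)$ have reciprocals summing to at most $y$. The first step is to recall that $|\hhat_d|^2$ is exponential with mean $1-\sigesq$, so $\rho|\hhat_d|^2$ is exponential with mean $\rho\varhh$; equivalently, the transmit-side SNR $V_d := 1+\rho|\hhat_d|^2$ has a shifted-exponential density $f_{V}(v) = \frac{1}{\rho\varhh}\exp\!\left(-\frac{v-1}{\rho\varhh}\right)$ on $v\ge 1$. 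Under this change of variables, $Y_d = 1/\log v_d$, and the two $V_d$ are i.i.d.

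**The main computation.**
Next I would write $\prob{Y_1+Y_2\le y} = \int_1^\pinf \prob{Y_2 \le y - 1/\log v_1}\, f_V(v_1)\,\diff v_1$. The inner probability is nonzero only when $y - 1/\log v_1 > 0$, i.e. $\log v_1 > 1/y$, i.e. $v_1 > 2^{1/y}$ (note $\log$ here is base $2$, consistent with the rate formula). On that range, $\{Y_2 \le y - 1/\log v_1\}$ is $\{1/\log v_2 \le y - 1/\log v_1\}$, i.e. $\log v_2 \ge \frac{1}{y - 1/\log v_1} = \frac{\log v_1}{y\log v_1 - 1}$, i.e. $v_2 \ge 2^{\log v_1/(y\log v_1-1)} = v_1^{1/(y\log v_1 - 1)}$ — this is where the exponent $1/(y\log v - 1)$ in \eqref{eq:to_2r} comes from. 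Integrating the exponential tail of $V_2$ from that threshold gives $\exp\!\left(-\frac{v_1^{1/(y\log v_1-1)}-1}{\rho\varhh}\right)$, and multiplying by $f_V(v_1)$ and integrating $v_1$ from $2^{1/y}$ to $\pinf$ produces exactly \eqref{eq:to_2r} after combining the two exponential factors into $\exp\!\left(-\frac{v+v^{1/(y\log v-1)}-2}{\rho\varhh}\right)$.

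**The bounds.**
For \eqref{eq:to_2r_bnd} the idea is that the integrand cannot be evaluated in closed form, so I would bound the exponent $g(v) := v + v^{1/(y\log v - 1)} - 2$ over the integration region $v\ge 2^{1/y}$. The upper bound on $\prob{Y_1+Y_2\le y}$: by symmetry, the event $\{Y_1+Y_2\le y\}$ is contained in $\{Y_1\le y\}\cap\{Y_2\le y\}$ is not tight enough; instead note $\{Y_1+Y_2\le y\} \subseteq \{\min(Y_1,Y_2)\le y/2\}$ — no, rather $\{Y_1+Y_2\le y\} \subseteq \{Y_1 \le y\}\cap\{Y_2\le y\}$ gives, by independence, $\prob{Y_1+Y_2\le y} \le \prob{Y_1\le y}^2 = \prob{V_1 \ge 2^{1/y}}^2 = \exp\!\left(-2\frac{2^{1/y}-1}{\rho\varhh}\right)$, which is the stated upper bound. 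For the lower bound, $\{Y_1\le y/2\}\cap\{Y_2\le y/2\} \subseteq \{Y_1+Y_2\le y\}$, and $\prob{Y_1\le y/2} = \prob{V_1\ge 2^{2/y}} = \exp\!\left(-\frac{2^{2/y}-1}{\rho\varhh}\right)$, so independence gives $\prob{Y_1+Y_2\le y}\ge \exp\!\left(-2\frac{2^{2/y}-1}{\rho\varhh}\right)$, the stated lower bound.

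**Main obstacle.**
The routine parts are the change of variables and the two exponential integrations; the one place to be careful is tracking the logarithm base (base $2$ throughout, since $R_d = W\log(1+\cdot)$ is measured so that $B/R_d$ is in symbol periods) and confirming that the threshold $v > 2^{1/y}$ correctly delimits where the convolution integrand is supported — a sign error there would misplace the lower limit of integration in \eqref{eq:to_2r}. The bounds themselves are then immediate from the elementary set inclusions $\{Y_1\le y/2\}\cap\{Y_2\le y/2\}\subseteq\{Y_1+Y_2\le y\}\subseteq\{Y_1\le y\}\cap\{Y_2\le y\}$ together with independence, so the only real work is the exact integral \eqref{eq:to_2r}.
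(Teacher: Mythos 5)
Your derivation of the exact integral \eqref{eq:to_2r} is correct and is essentially the paper's computation in a different order: the paper convolves $f_Y$ with $F_Y$ over $[0,y]$ and then substitutes $v=2^{1/x}$, whereas you substitute first (working with the shifted exponential $V_d=1+\rho|\hat{h}_d|^2$) and then condition on $V_1$; the support threshold $v>2^{1/y}$ and the exponent $1/(y\log v-1)$ come out identically. Where you genuinely depart from the paper is in the bounds \eqref{eq:to_2r_bnd}. The paper gets the upper bound by minorizing the term $v^{1/(y\log v-1)}$ in the integrand by $2^{1/y}$ over the integration range, and the lower bound by shrinking the range to $[2^{2/y},+\infty)$ and majorizing that term there (its $\epsilon=1/y$ step). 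You instead use the elementary inclusions $\{Y_1\le y/2\}\cap\{Y_2\le y/2\}\subseteq\{Y_1+Y_2\le y\}\subseteq\{Y_1\le y\}\cap\{Y_2\le y\}$ (valid since $Y_d\ge 0$), independence, and the closed form $\prob{Y_d\le t}=\expof{-G\left(2^{1/t}-1\right)}$. The two routes yield identical bounds --- the paper's shifted lower limit is the integral-level shadow of your $y/2$ split --- but yours is cleaner and buys more: it bypasses the exact integral entirely and generalizes immediately to $D$ devices via $\prod_{d}\prob{Y_d\le y/D}\le\prob{\sum_{d}Y_d\le y}\le\prod_{d}\prob{Y_d\le y}$, recovering Proposition~\ref{p:to_nr} without the induction carried out in the paper's appendix. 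Your self-corrections in the upper-bound paragraph land on the right inclusion, and your flagged concern about the logarithm base and the lower integration limit is resolved correctly in your own computation.
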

\begin{proof}
See Appendix~\ref{a:b}.	
\end{proof}

For a general $D$, the exact expression of time underflow involves an integration over $D$ dimensions, which is neither analytically nor computationally tractable as $D$ grows large. The bounds derived for $D=2$ are useful to derive upper and lower bounds for a general $D$ which are given in the next proposition. Before proceeding, we denote by $G$ the measurement \gls{nsr}, i.e. the inverse of the measurement \gls{snr} $\rho(1-\sigesq)$.
\begin{proposition}[Time underflow, $D$ devices]\label{p:to_nr}
Suppose that there is a single \gls{ap} transmitting to $D$ devices  at a nominal \gls{snr} $\rho$ and consider a deadline $y$. The probability of time underflow is bounded as
\begin{equation}\label{eq:to_nr_bnd}
\expof{-DG\left(2^{D/y}-1\right)}
\leq 
\prob{\sum_{d=1}^D Y_d \leq y}
\leq  
\expof{-DG\left(2^{1/y}-1\right)},
\end{equation}
\end{proposition}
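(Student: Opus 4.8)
The plan is to sandwich the time-underflow event $\bigl\{\sum_{d=1}^D Y_d \le y\bigr\}$ between two events that each factor as a product over the $D$ devices, exploiting that the channel estimates $\bfhhat_d$, and hence the airtimes $Y_d$, are mutually independent and (since all devices share the nominal \gls{snr} $\rho$) identically distributed. First I would pin down the marginal law of $Y_d$: for a single \gls{ap} the rate is $R_d = W\log\bigl(1+\rho|\hhat_d|^2\bigr)$ with $\hhat_d\sim\mathcal{CN}\bigl(0,\,1-\sigesq\bigr)$, so $X_d := \rho|\hhat_d|^2$ is exponential with mean $\rho\varhh = 1/G$. Since $Y_d = W/R_d = 1/\log(1+X_d)$ is strictly decreasing in $X_d$, for any budget $t>0$ we get $\prob{Y_d\le t} = \prob{X_d \ge 2^{1/t}-1} = \expof{-G\bigl(2^{1/t}-1\bigr)}$; keeping the backoff factor of \eqref{eq:rate_bkoff} would merely replace $t$ by $\beta t$.

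For the upper bound, each $Y_d$ is a.s.\ positive, so $\sum_{d=1}^D Y_d \le y$ forces $Y_d \le y$ for every $d$; hence $\bigl\{\sum_d Y_d\le y\bigr\}\subseteq\bigcap_{d=1}^D\{Y_d\le y\}$, and independence gives $\prob{\sum_d Y_d\le y}\le\prod_{d=1}^D\prob{Y_d\le y}=\expof{-DG\bigl(2^{1/y}-1\bigr)}$. For the lower bound the reverse inclusion $\bigcap_{d=1}^D\{Y_d\le y/D\}\subseteq\bigl\{\sum_d Y_d\le y\bigr\}$ is immediate, so $\prob{\sum_d Y_d\le y}\ge\prod_{d=1}^D\prob{Y_d\le y/D}=\expof{-DG\bigl(2^{D/y}-1\bigr)}$. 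Chaining the two estimates yields \eqref{eq:to_nr_bnd}, and the choice $D=2$ reproduces \eqref{eq:to_2r_bnd} of \propref{p:to_2r}.

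I do not expect a serious obstacle: this is just the set-inclusion half of \propref{p:to_2r} lifted to $D$ devices, and it does not even require the exact $D$-fold convolution. The points needing care are the monotonicity bookkeeping when passing from the \gls{cdf} of $Y_d$ to the \gls{ccdf} of $X_d$ (the map $t\mapsto 2^{1/t}$ being decreasing, the per-device event $\{Y_d\le y/D\}$ shrinks as $D$ grows, which is why the lower bound is the looser of the two), and the appeal to cross-device independence of $\{Y_d\}_{d=1}^D$, inherited from the independent-fading assumption. It is worth recording that the bounds are deliberately coarse but closed-form: their ratio is $\expof{DG\bigl(2^{D/y}-2^{1/y}\bigr)}$, so they are informative mainly in the low-load, long-cycle regime and widen quickly as $D/y$ increases.
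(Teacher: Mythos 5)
Your proposal is correct, and it takes a genuinely different route from the paper. The paper proves the two-device case by writing out the exact convolution integral for $\prob{Y_1+Y_2\le y}$, bounding the integrand (using $v^{1/(y\log v)-1}\ge 2^{1/y}$ for the upper bound and a truncation at $2^{2/y}$ for the lower), and then lifts the result to general $D$ by induction, repeating the convolution-plus-integrand-bound argument at each inductive step. You replace all of that with two one-line event inclusions, $\bigl\{\sum_d Y_d\le y\bigr\}\subseteq\bigcap_d\{Y_d\le y\}$ and $\bigcap_d\{Y_d\le y/D\}\subseteq\bigl\{\sum_d Y_d\le y\bigr\}$, combined with the explicit marginal $\prob{Y_d\le t}=\expof{-G(2^{1/t}-1)}$ and cross-device independence; this recovers \eqref{eq:to_nr_bnd} exactly, and specializes to \eqref{eq:to_2r_bnd} at $D=2$. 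What each approach buys: yours is shorter, needs no induction or convolution, and generalizes verbatim to heterogeneous \glspl{nsr} (giving \propref{p:to_nr_varsnr} with the arithmetic mean $\overline{G}_D$ immediately) and indeed to any marginal law of $R_d$, since it uses only independence and positivity of the $Y_d$. The paper's integral representation, by contrast, is the object that gets refined into the tighter bound of \propref{p:to_2r_bnd} and \propref{p:to_nr_varsnr_tight} (the extra factor $J/(J+G)$), so the heavier machinery is not wasted downstream. Your closing observation that the two bounds are separated by a factor $\expof{DG(2^{D/y}-2^{1/y})}$ and hence are informative mainly for small $D/y$ is accurate and consistent with \figref{fig:compare_bounds}.
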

\begin{proof}
The proof uses mathematical induction. The expression is already shown to hold for the base case ($D=2$) in \propref{p:to_2r}. In Appendix~\ref{a:b}, we prove that it holds for arbitrary $D$.	
\end{proof}	

We also give a similar result for when the devices do not necessarily experience the same \gls{snr}. First define $\rho_d$ and $\sigma_d^2$ to be the SNR and estimation error variance for the $d$th device. Also define $G_d$ to be the inverse of $\rho_d(1-\sigma_d^2)$.
\begin{proposition}[Time underflow, $D$ devices at different \gls{snr}]\label{p:to_nr_varsnr}
Out of $D$ devices, suppose that the $k$th device has a measurement \gls{nsr} $G$. The probability of time underflow is upper bounded as
\begin{align}\label{eq:sandwich_D}
\expof{- D\,\overline{G}_D (2^{D/y}-1)  }
&\leq 
\prob{\sum_{d=1}^d Y_d \leq y} 
\leq  
\expof{- D\,\overline{G}_D (2^{1/y}-1)  },\;\text{where} \\
\overline{G}_D 
&= \frac{1}{D}\sum_{d=1}^d G_d.
\end{align}
\end{proposition}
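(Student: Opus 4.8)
The plan is to reuse the mechanism behind \propref{p:to_nr}, replacing the single measurement \gls{nsr} $G$ by the per-device values $G_d$ and obtaining the two-sided estimate purely from inclusions between monotone events, exactly as in the $D=2$ case of \propref{p:to_2r}. The only distributional input needed is the marginal law of each scaled airtime. For $A=1$ and the rate in \eqref{eq:rate}, $Y_d=W/R_d=1/\log\bigl(1+\rho_d|\hhat_d|^2\bigr)$; since $\hhat_d\sim\mathcal{CN}(0,1-\sigma_d^2)$, the quantity $\rho_d|\hhat_d|^2$ is exponential with mean $\rho_d(1-\sigma_d^2)=1/G_d$, so that for every $t>0$,
\begin{equation*}
\prob{Y_d \le t} = \prob{\rho_d|\hhat_d|^2 \ge 2^{1/t}-1} = \expof{-G_d\bigl(2^{1/t}-1\bigr)}.
\end{equation*}
When all $G_d$ coincide this collapses to the tail already used in \propref{p:to_2r} and \propref{p:to_nr}.

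For the upper bound I would note that, since each $Y_d\ge0$, the event $\{\sum_{d=1}^D Y_d\le y\}$ is contained in $\bigcap_{d=1}^D\{Y_d\le y\}$; independence of the fades then gives $\prob{\sum_{d}Y_d\le y}\le\prod_{d}\prob{Y_d\le y}=\expof{-\bigl(\sum_{d}G_d\bigr)(2^{1/y}-1)}$. For the lower bound I would use the reverse inclusion $\bigcap_{d=1}^D\{Y_d\le y/D\}\subseteq\{\sum_{d=1}^D Y_d\le y\}$ together with independence, which yields $\prob{\sum_{d}Y_d\le y}\ge\prod_{d}\prob{Y_d\le y/D}=\expof{-\bigl(\sum_{d}G_d\bigr)(2^{D/y}-1)}$. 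Substituting $\sum_{d=1}^D G_d=D\,\overline{G}_D$ into both displays gives \eqref{eq:sandwich_D}. No induction on $D$ is needed here, since the intersection/union inclusions handle all $D$ simultaneously, although an inductive phrasing would also work and would mirror the proof of \propref{p:to_nr}.

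The reason this goes through so cheaply is that it avoids the $D$-fold convolution of the heavy-tailed densities $f_{Y_d}$ altogether: each $Y_d$ has infinite first moment, so Markov/Chernoff- and concentration-type tools are unavailable, and the sandwich of monotone events is essentially the only elementary device left — so there is no real computational obstacle, only the modeling observation that one should not attempt the exact integral. The one point that warrants a line of care is that neither inclusion is an equality, so the result is genuinely an estimate rather than a closed form; the slack is governed by the ratio of $2^{D/y}-1$ to $2^{1/y}-1$, which is negligible for a loose deadline ($y$ large) and widens as $y\to0$. It is worth emphasizing that only the sum $\sum_{d}G_d$, equivalently the arithmetic mean $\overline{G}_D$, enters either bound, so the estimate is insensitive to how the per-device NSRs are distributed once their average is fixed — which is precisely what makes the heterogeneous-\gls{snr} extension come almost for free from \propref{p:to_nr}.
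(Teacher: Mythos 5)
Your proof is correct, and it takes a genuinely different route from the paper's. The paper proves this proposition the same way it proves \propref{p:to_2r} and \propref{p:to_nr}: write the two-device probability as a one-dimensional convolution $\int_0^y f_{Y_1}(x)F_{Y_2}(y-x)\,\diff x$, bound the integrand after a change of variables, and then extend to general $D$ by induction on the convolution recursion $q_n(y)=\int_0^y q_{n-1}(x)f_Y(y-x)\,\diff x$. You instead get both bounds in one step from monotone event inclusions: since every $Y_d\ge 0$, $\left\{\sum_d Y_d\le y\right\}\subseteq\bigcap_d\{Y_d\le y\}$ gives the upper bound, and $\bigcap_d\{Y_d\le y/D\}\subseteq\left\{\sum_d Y_d\le y\right\}$ gives the lower bound, with independence of the fades and the marginal $\prob{Y_d\le t}=\expof{-G_d(2^{1/t}-1)}$ doing the rest; both reduce to $\sum_d G_d = D\overline{G}_D$ exactly as claimed. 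Your argument is more elementary, needs no induction, and handles heterogeneous $G_d$ and all $D$ simultaneously, which makes transparent why only the arithmetic mean of the NSRs enters. What the paper's convolution machinery buys, and your shortcut does not, is the exact integral expression of \propref{p:to_2r} and the tightened multiplicative factors $J_d/(J_d+G_d)$ of \propref{p:to_2r_bnd} and \propref{p:to_nr_varsnr_tight}, which come from bounding the integrand more carefully rather than discarding the joint geometry of the simplex. For the loose sandwich bounds asserted here, however, your inclusion argument is sufficient and arguably preferable.
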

\begin{proof}
The proof follows that of \propref{p:to_2r} and \propref{p:to_nr}. First, prove the result for two devices as in \propref{p:to_2r}, then use recursion to generalize to an arbitrary $D$ as in \propref{p:to_nr}.
\end{proof}

\begin{proposition}[Tight bound on the probability of time underflow, two devices]\label{p:to_2r_bnd}
Suppose that there is a single \gls{ap} transmitting to two devices that are at a nominal \gls{snr} $\rho$ and consider a deadline $y$. The probability of time underflow can be tightly bounded above as
\begin{align}\label{eq:to_2r_tight}
\prob{Y_1+Y_2\leq y}  
\leq \left(\frac{J}{J+G}\right) e^{-2G(2^{1/y}-1)},
\text{where}\\
J 
= \max_{v\geq 2^{1/y}} \left(\frac{1}{2^{1/y}-v}\right)\lnof{1- \expof{G 2^\frac{1}{y}-G 2^\frac{\ln v}{y\ln v - \ln 2}}}.
\label{eq:to_2r_tight_2}
\end{align}	
\end{proposition}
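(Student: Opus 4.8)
The plan is to start from the exact single integral of \propref{p:to_2r} and tighten the right-hand inequality of \eqref{eq:to_2r_bnd} using the pointwise bound that the definition of $J$ encodes. Write $a=2^{1/y}$ and $\psi(v)=2^{\ln v/(y\ln v-\ln 2)}$, and recall $\rho\varhh=1/G$ where $G$ is the measurement \gls{nsr}. Then \eqref{eq:to_2r} reads
\begin{align}
\prob{Y_1+Y_2\leq y}
= G\int_{a}^{\pinf}\expof{-G\bigl(v+\psi(v)-2\bigr)}\,\diff v
= Ge^{2G}e^{-Ga}\int_{a}^{\pinf}e^{-Gv}\,e^{-G(\psi(v)-a)}\,\diff v .
\end{align}
Differentiating $w\mapsto w/(yw-\ln 2)$ in $w=\ln v$ shows $\psi$ is strictly decreasing on $[a,\pinf)$, from $\psi(a^{+})=\pinf$ down to $\psi(\pinf)=a$; in particular $\psi(v)\geq a$ on the range of integration, so pulling out $e^{-Ga}$ is legitimate and $e^{-G(\psi(v)-a)}\in(0,1]$ there.

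Next I would rewrite the definition of $J$ in a directly usable form. Putting $g(v)=\tfrac{1}{a-v}\lnof{1-\expof{G2^{1/y}-G\psi(v)}}=\tfrac{1}{a-v}\lnof{1-e^{-G(\psi(v)-a)}}$, the definition $J=\max_{v\geq a}g(v)$ says $g(v)\leq J$ for every $v\geq a$. Since $a-v\leq 0$, multiplying through by $a-v$ (which flips the inequality) and then exponentiating yields
\begin{align}
e^{-G(\psi(v)-a)}\ \leq\ 1-e^{-J(v-a)},\qquad v\geq a .
\end{align}
At the same point I would record that $0<J<\pinf$: on $(a,\pinf)$ the function $g$ is a ratio of two negative quantities, hence positive, while it vanishes at both endpoints (see below), so its maximum is finite and strictly positive. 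Positivity is exactly what makes $J/(J+G)<1$, so the resulting bound genuinely improves on \eqref{eq:to_2r_bnd}; finiteness keeps the next integral convergent.

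Substituting the displayed inequality into the integral and evaluating the two elementary pieces $\int_{a}^{\pinf}e^{-Gv}\,\diff v=e^{-Ga}/G$ and $\int_{a}^{\pinf}e^{-Gv}e^{-J(v-a)}\,\diff v=e^{-Ga}/(G+J)$ gives
\begin{align}
\prob{Y_1+Y_2\leq y}
\ \leq\ Ge^{2G}e^{-Ga}\left(\frac{e^{-Ga}}{G}-\frac{e^{-Ga}}{G+J}\right)
= e^{2G}e^{-2Ga}\,\frac{J}{J+G}
= \left(\frac{J}{J+G}\right)e^{-2G(2^{1/y}-1)},
\end{align}
where I used $e^{2G}e^{-2Ga}=e^{-2G(a-1)}$ and $a-1=2^{1/y}-1$; this is \eqref{eq:to_2r_tight}.

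The step I expect to be the real obstacle is confirming that the maximum defining $J$ exists and is finite, i.e. the endpoint analysis of $g$. As $v\to a^{+}$, with $w=\ln v$ one has $yw-\ln 2\sim y(v-a)/a$ (using $y\ln a=\ln 2$), so $\psi(v)=\expof{\frac{(\ln 2)w}{yw-\ln 2}}$ blows up like $\expof{\frac{a(\ln a)\ln 2}{y(v-a)}}$; hence $e^{-G(\psi(v)-a)}\to 0$ super-exponentially and $g(v)\to 0$. As $v\to\pinf$, a Taylor expansion gives $\psi(v)-a\sim a(\ln 2)^{2}/(y^{2}\ln v)\to 0$, so $\lnof{1-e^{-G(\psi(v)-a)}}\sim\lnof{G(\psi(v)-a)}=O(\ln\ln v)$ while $a-v\sim-v$, and again $g(v)\to 0$. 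Continuity of $g$ on $(a,\pinf)$ together with these limits produces a finite, strictly positive maximum; and if one only wants a supremum in place of a maximum, even the continuity step can be skipped, since the argument uses nothing about $J$ beyond $g(v)\leq J$.
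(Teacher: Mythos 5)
Your proof is correct and follows essentially the same route as the paper's: the key inequality $e^{-G(\psi(v)-a)}\leq 1-e^{-J(v-a)}$ is exactly the paper's claimed pointwise bound (its eq.~\eqref{eq:bold_claim} with $A=J$), and the subsequent integral evaluation is identical. Your presentation is arguably cleaner in that you obtain this inequality directly by unwinding the definition of $J$ rather than positing an undetermined constant $A$ and solving for the admissible range, and your endpoint asymptotics for $g$ make the existence of the finite positive maximum more explicit than the paper's Rolle's-theorem argument; but these are refinements of the same proof, not a different one.
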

\begin{proof}
See Appendix~\ref{a:c}.	
\end{proof}
The tightened bound is almost identical to the looser bound in \eqref{eq:to_2r_bnd}, up to a shrinking scale factor in the expression \eqref{eq:to_2r_tight}. Note that the value of $J$ can be obtained either by solving for a zero derivative of function on the right hand side of \eqref{eq:to_2r_tight_2}, or by using a simple gradient ascent method. This bound can be generalized to an arbitrary $D$ and to an arbitrary (but known) set of measurement \glspl{nsr} $\set{G_d}$.

\begin{proposition}[Tight bound on the probability of time underflow, $D$ devices]\label{p:to_nr_varsnr_tight}
Out of $D$ devices, suppose that the $d$th device has a measurement \gls{nsr} $G_d$. The probability of time underflow is bounded as
\begin{align}\label{eq:to_nr_bnd_tight}
\prob{\sum_{d=1}^D Y_d \leq y} 
&\leq \expof{-D\overline{G}_D \left(2^{1/y}-1\right)} \cdot \prod_{d=2}^D \frac{J_d}{J_d+G_d},\;\text{where}\\
J_d 
&= \max_{v\geq 2^{1/y}} \left(\frac{1}{2^{1/y}-v}\right)
\lnof{1-\expof{(d-1)\overline{G}_{d-1}\left(2^\frac{1}{y}-2^\frac{\ln v}{y\ln v - \ln 2}\right)}}.
\end{align}
\end{proposition}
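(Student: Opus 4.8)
The plan is to establish \eqref{eq:to_nr_bnd_tight} by induction on $D$, combining the two‑device tightening of \propref{p:to_2r_bnd} with the peeling recursion already used for \propref{p:to_nr}. The base case is $D=2$: the argument of Appendix~\ref{a:c} goes through after replacing the common \gls{nsr} $G$ by the two distinct values $G_1,G_2$, and, choosing the roles so that the device that is integrated out has \gls{nsr} $G_2$, it yields $\prob{Y_1+Y_2\le y}\le \tfrac{J_2}{J_2+G_2}\,\expof{-(G_1+G_2)(2^{1/y}-1)}$ with $J_2$ built from $\overline{G}_1=G_1$; since $2\overline{G}_2=G_1+G_2$ this is exactly \eqref{eq:to_nr_bnd_tight} for $D=2$.

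For the inductive step I would assume \eqref{eq:to_nr_bnd_tight} for $D-1$ devices and \emph{every} deadline. Writing $V_d=2^{1/Y_d}=1+\rho_d|\hhat_d|^2$, the variable $V_d-1$ is exponential with rate $G_d$, so conditioning on $V_D$ (the inner probability vanishes unless $v>2^{1/y}$) gives
\begin{align}
\prob{\sum_{d=1}^{D}Y_d\le y}=\int\limits_{2^{1/y}}^{\pinf}\prob{\sum_{d=1}^{D-1}Y_d\le y-\tfrac{1}{\log_2 v}}\,G_D\,\expof{-G_D(v-1)}\,\diff v .
\end{align}
In the inner probability the effective deadline is $y'=y-1/\log_2 v\le y$; the inductive hypothesis bounds it by $\expof{-(D-1)\overline{G}_{D-1}(2^{1/y'}-1)}$ times $\prod_{d=2}^{D-1}\tfrac{J_d(y')}{J_d(y')+G_d}$. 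I would then (i) pull this product out of the integral, and (ii) recognise what remains as precisely the two‑device integral of \propref{p:to_2r_bnd}, in which the aggregate parameter $(D-1)\overline{G}_{D-1}$ plays the role of one \gls{nsr} and $G_D$ the role of the other; Appendix~\ref{a:c}'s tightening then contributes the factor $\tfrac{J_D}{J_D+G_D}$ with $J_D$ exactly as stated (the $\overline{G}_{D-1}$ inside $J_D$ is inherited from the residual exponent). Since $(D-1)\overline{G}_{D-1}+G_D=\sum_{d=1}^{D}G_d=D\overline{G}_D$ and the pulled‑out product times $\tfrac{J_D}{J_D+G_D}$ telescopes to $\prod_{d=2}^{D}\tfrac{J_d}{J_d+G_d}$, the induction closes.

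Two ingredients carry the weight. The main obstacle is step (i): replacing $y'$ by $y$ inside $\prod_{d=2}^{D-1}\tfrac{J_d(\cdot)}{J_d(\cdot)+G_d}$ requires this product to be nondecreasing in the deadline, which (since $t\mapsto t/(t+G)$ is increasing) reduces to showing each $J_d(y)$ is nondecreasing in $y$. The substitution $\alpha=(\ln 2)/y$, $\gamma=\ln v-\alpha$ rewrites the maximand so that
\begin{align}
J_d=e^{-\alpha}\,\max_{\gamma>0}\frac{1}{1-e^{\gamma}}\,\lnof{1-\expof{(d-1)\overline{G}_{d-1}\,e^{\alpha}\bigl(1-e^{\alpha^{2}/\gamma}\bigr)}},
\end{align}
and the required monotonicity in $y$ (equivalently, the appropriate monotonicity in $\alpha$) should follow by differentiating this objective or by an envelope‑theorem argument on the maximiser; the limiting behaviour $J_d\to\infty$ as $y\to\infty$ and $J_d\to0$ as $y\to0$ is consistent with it. The second, routine, point is that Appendix~\ref{a:c}'s tightening never uses the exact residual \gls{cdf}, only an upper bound of the form $\expof{-c(2^{1/t}-1)}$, so it applies verbatim with $c=(D-1)\overline{G}_{D-1}$; the leftover integral $\int_{2^{1/y}}^{\pinf}\bigl(1-\expof{-J_D(v-2^{1/y})}\bigr)G_D\,\expof{-G_D(v-1)}\,\diff v$ then evaluates, under $u=v-2^{1/y}$, to $\tfrac{J_D}{J_D+G_D}\,\expof{-G_D(2^{1/y}-1)}$, which supplies the claimed factor and exponent.
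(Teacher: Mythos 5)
Your route is the one the paper intends (it only sketches the proof as ``similar to Props.~3 and~5''): peel off the last device via the convolution/conditioning identity, apply the inductive hypothesis at the reduced deadline $y'=y-1/\log_2 v$, and run the Appendix-C pointwise tightening on the residual integral with the aggregate \gls{nsr} $(D-1)\overline{G}_{D-1}$ playing the role of the first device and $G_D$ that of the second. Your base case, the identification $1/y'=\ln v/(y\ln v-\ln 2)$, the closed-form evaluation of the leftover integral yielding $\frac{J_D}{J_D+G_D}\expof{-G_D(2^{1/y}-1)}$, and the bookkeeping $(D-1)\overline{G}_{D-1}+G_D=D\overline{G}_D$ are all correct.

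The one genuine gap is exactly the step you flag and then defer: replacing $\prod_{d=2}^{D-1} J_d(y')/(J_d(y')+G_d)$ by its value at $y$ before pulling it out of the integral. Since $y'$ ranges over all of $(0,y)$ as $v$ ranges over $(2^{1/y},\infty)$, this requires each $J_d(\cdot)$ to be nondecreasing on the whole interval, and ``should follow by differentiating or by an envelope-theorem argument'' is not a proof --- the maximand is a ratio of two negative quantities each varying nontrivially with the deadline, and the maximizer moves with $y$, so the monotonicity is not obvious and must be established (or the statement must be reformulated, e.g.\ with $J_d$ defined as an infimum over deadlines in $(0,y]$). Without it the induction does not close with the full product: falling back on the loose \propref{p:to_nr_varsnr} bound for the inner $D-1$ devices, which is what survives unconditionally, yields only the single-factor bound $\frac{J_D}{J_D+G_D}\expof{-D\overline{G}_D(2^{1/y}-1)}$. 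So either supply the monotonicity lemma or weaken the claim to one factor; as written, the load-bearing step is asserted rather than proved.
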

\begin{proof}
Similar to the proofs of \propref{p:to_nr_varsnr} and \propref{p:to_2r_bnd}.
\end{proof}

\begin{figure}[t]
\centering
\subfloat[Base SNR $=15$ dB]{
\includegraphics[width=0.3\textwidth]{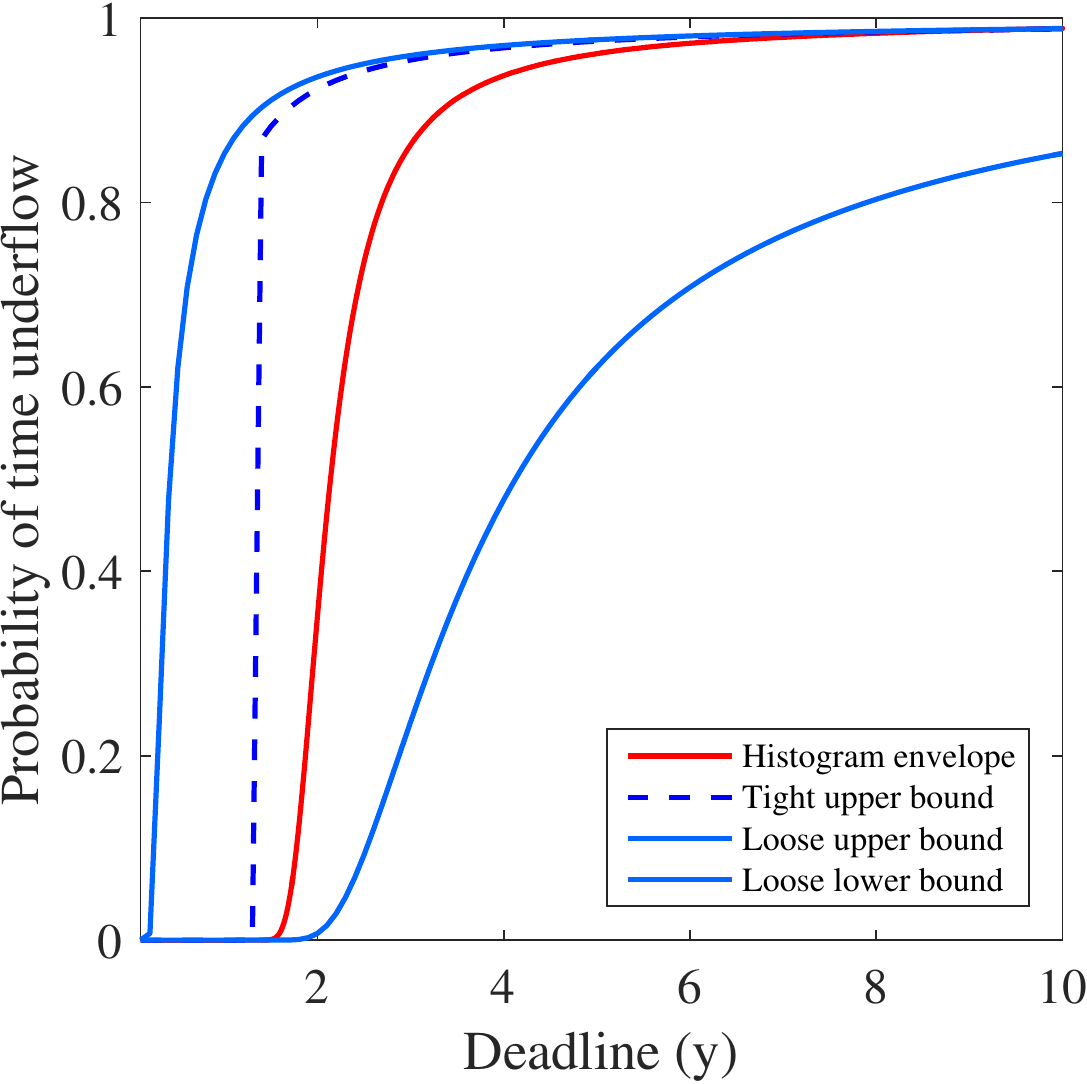}\label{fig:bnd15}
}
\hfill
\subfloat[Base SNR $=20$ dB]{
\includegraphics[width=0.3\textwidth]{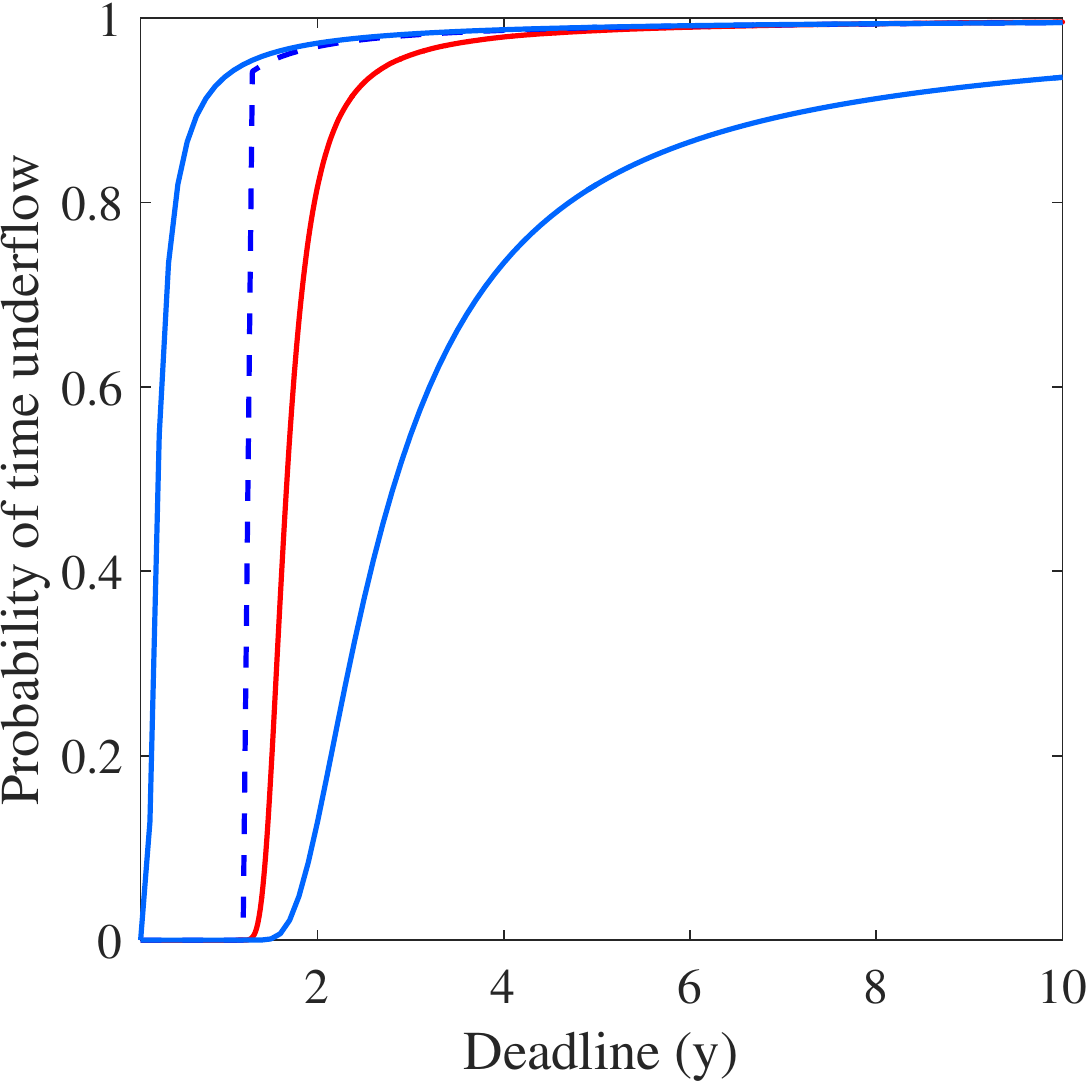}\label{fig:bnd20}
}
\hfill
\subfloat[Base SNR $=25$ dB]{
\includegraphics[width=0.3\textwidth]{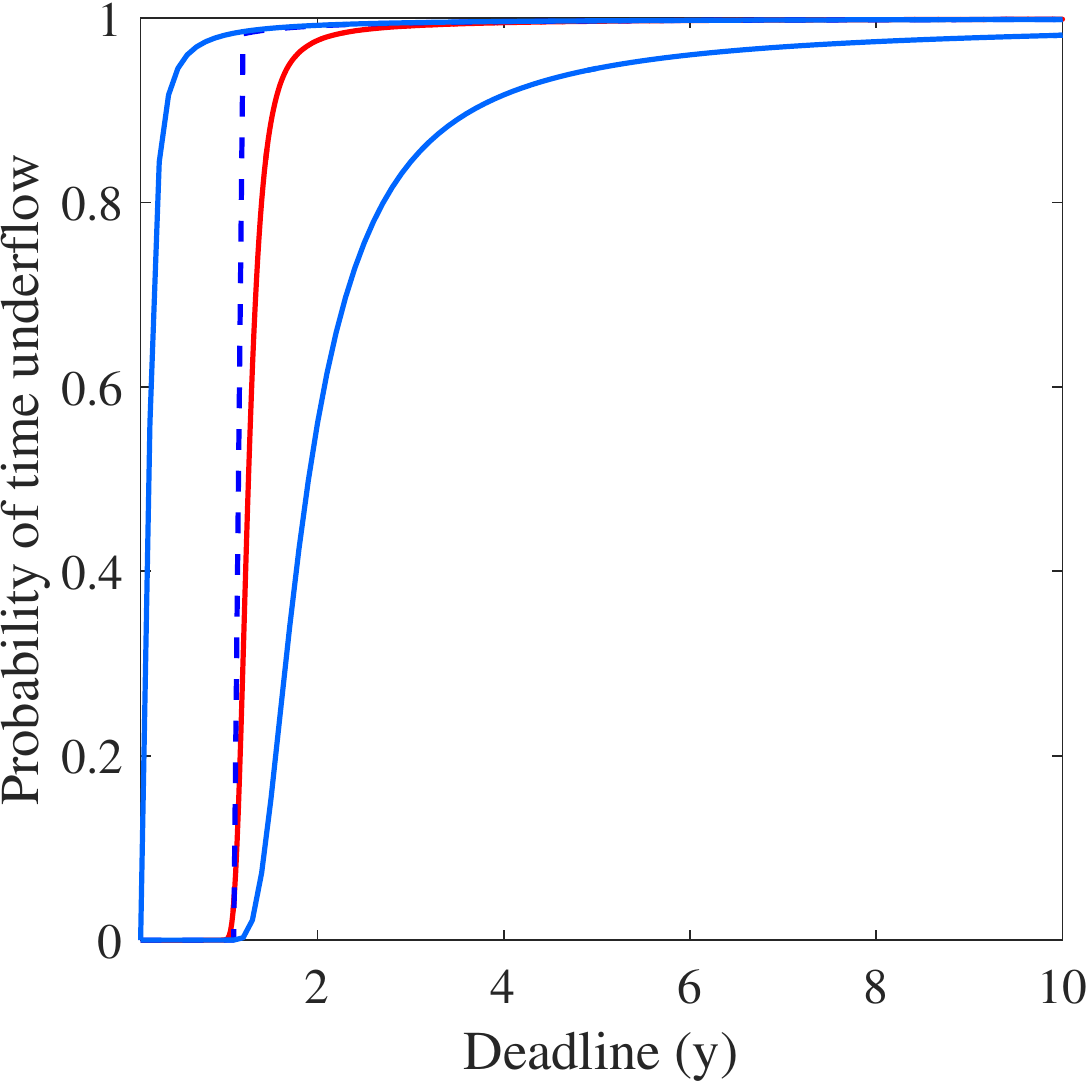}\label{fig:bnd25}
}
\caption{The probability of time underflow as compared to its loose upper and lower bounds in \eqref{eq:to_nr_bnd} and its lower bound in \eqref{eq:to_nr_bnd_tight}. The spread of the tight upper against the histogram envelope narrows down as the base SNR increases.}\label{fig:compare_bounds}
\vspace{-20pt}
\end{figure}

\figref{fig:compare_bounds} compares The probability of time underflow as compared to its loose upper and lower bounds in \eqref{eq:to_nr_bnd} and its lower bound in \eqref{eq:to_nr_bnd_tight}. We consider 1 \gls{ap} and 10 actuators with \glspl{snr} that are uniformly chosen in the range $[\text{Base SNR},\,\text{Base SNR}+5]$ dB. The probability of time overflow is generated experimentally via Monte Carlo simulation, and its plot is the envelope of the empirical probability histogram. We observe that he spread of the tight upper against the histogram envelope tightens as the base SNR increases, suggesting that a higher base SNR improves the approximation of the time underflow probability by its tight upper bound.

All of the previous results have expressed the outage probability when there is a single \gls{ap}, i.e. $A=1$. We also derive in the next proposition a lower and and an upper bound that sandwich $\prob{\evso}$. These bounds hold even when $A>1$.
\begin{proposition}[Sandwich bound for the probability of time overflow]\label{p:sandwich}
For an arbitrary number of \glspl{ap} $A$ and arbitrary (not necessarily equal, but known) \gls{snr} values, the probability of time overflow can be sandwiched as
\begin{align}
\prod_{d=1}^D \prob{R_d \leq \frac{DB}{\mytd}}
\leq
\prob{\evto}
\leq
1-\prod_{d=1}^D \prob{R_d > \frac{DB}{\mytd}}.
\end{align}
\end{proposition}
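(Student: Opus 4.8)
The idea is to squeeze $\evto$ between two events that depend on the rates $R_1,\dots,R_D$ only through a single common per-device threshold $DB/\mytd$, and then exploit the mutual independence of the rates. Recall that $\evto=\{\mythat>\mytd\}=\{\sum_{d=1}^D B/R_d>\mytd\}$, and that $R_1,\dots,R_D$ are independent because the channel vectors $\bfh_d$ — hence their estimates $\bfhhat_d$ and the induced rates $R_d=\beta W\log(1+\bfhhat_d^*\mathbf{G}_d\bfhhat_d)$ — are independent across devices. This independence does not require $A=1$, which is exactly what makes the bound valid for an arbitrary number of \glspl{ap} and arbitrary (known) SNR values.

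First I would record two elementary pigeonhole-type set inclusions. For the upper bound: if $R_d>DB/\mytd$ held simultaneously for every $d$, then $\sum_{d=1}^D B/R_d<\sum_{d=1}^D \mytd/D=\mytd$, so no overflow could occur; taking complements,
\[
\evto\ \subseteq\ \bigcup_{d=1}^D\Bigl\{R_d\le \frac{DB}{\mytd}\Bigr\}.
\]
For the lower bound, running the same estimate in reverse: if $R_d<DB/\mytd$ for every $d$ then $\sum_{d=1}^D B/R_d>\mytd$ (now a \emph{strict} inequality), so
\[
\bigcap_{d=1}^D\Bigl\{R_d<\frac{DB}{\mytd}\Bigr\}\ \subseteq\ \evto.
\]

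Next I would pass to probabilities by monotonicity and then factor over devices using independence. The first inclusion gives
\[
\prob{\evto}\ \le\ \prob{\bigcup_{d=1}^D\Bigl\{R_d\le \tfrac{DB}{\mytd}\Bigr\}}\ =\ 1-\prod_{d=1}^D\prob{R_d>\frac{DB}{\mytd}},
\]
and the second gives $\prob{\evto}\ \ge\ \prod_{d=1}^D\prob{R_d<DB/\mytd}$. To put the lower bound in the stated form I would finally observe that each $R_d=\beta W\log(1+\bfhhat_d^*\mathbf{G}_d\bfhhat_d)$ is absolutely continuous — the quadratic form $\bfhhat_d^*\mathbf{G}_d\bfhhat_d=\sum_a \rho_d^{(a)}|\hhat_d^{(a)}|^2$ is a positive combination of independent exponential variables and hence has a density on $(0,\infty)$ — so $\prob{R_d<DB/\mytd}=\prob{R_d\le DB/\mytd}$, which yields the claimed $\prod_{d=1}^D\prob{R_d\le DB/\mytd}\le \prob{\evto}$.

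There is no genuine analytic difficulty here; the argument never uses the particular law of the rates, so it transfers verbatim to $A>1$ and to non-identical SNRs. The only point I would be careful to get right is the strict-versus-weak inequality at the threshold $DB/\mytd$: this is precisely why the upper-bound inclusion is stated with ``$\le$'' while the lower-bound inclusion is stated with ``$<$'', and why the continuity of the $R_d$ must be invoked at the end to reconcile the lower bound with the ``$\le$'' that appears in the proposition.
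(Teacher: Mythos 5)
Your proof is correct and follows essentially the same route as the paper's: the same pigeonhole-style set inclusions followed by factoring over devices via independence of the rates. You are in fact slightly more careful than the paper in tracking strict versus weak inequalities at the threshold $DB/\mytd$ and invoking absolute continuity of $R_d$ to reconcile the lower bound with the ``$\leq$'' in the statement, a point the paper's appendix glosses over.
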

\begin{proof}
See Appendix~\ref{a:d}.
\end{proof}

\subsection{Reducing the outage probability}\label{ss:reduce}
\newcommand{\mytbreve}{{\breve{T}_\text{D}}}
We propose a variation of the variable-rate method that eliminates time overflow and decreases the probability of outage. We refer to this variation as \textit{modified variable-rate}. Instead of solving for the optimal $\beta$ in \eqref{eq:rate_bkoff} and selecting a rate $R_d$ as in \eqref{eq:rate}, the \glspl{ap} choose $\beta = 1$  and select a rate $\breve{R}_d = \alpha R_d$, where alpha is 
\begin{align}\label{Eq:alpha_modified}
\alpha = \frac{\sum_{d=1}^D T_d}{\mytd}.
\end{align}
The new choice of rate guarantees that the new total airtime $\mytbreve$ will not exceed the allotted downlink budget $\mytd$. Scaling  $R_d$ by $\alpha$ as in \eqref{Eq:alpha_modified} eliminates the need to optimize $\beta$ by noting that the new rate $\breve{R}_d$ is not a function of $\beta$, as $R_d$ is proportional to both $\beta$, as it appears in $R_d$, and $1/\beta$, as it appears in $T_d$.
The probability of outage will decrease in the following two scenarios.
\begin{itemize}
\item  Underspent downlink budget: Suppose that the total airtime $\mythat$ is shorter than the downlink budget $\mytd$, then $\alpha \leq 1$, and $\breve{R}_d \leq R_d$. Therefore, the new probability of device failure $\prob{\evdo}=\prob{\breve{R}_d>C_d}$ could not be worse than what it originally was, $\prob{R_d>C_d}$. Consequently, the probability of transmission error diminishes, and the probability of outage diminishes. 

\item Overspent downlink budget: Suppose that the total airtime $\mythat$ is longer than the downlink budget $\mytd$, then $\alpha > 1$, and $\breve{R}_d > R_d$. In this scenario, there is a time overflow event and thus an outage event regardless of whether there is a transmission error event. Selecting a set of higher rates $\set{\breve{R}_d}$ has one of two possible outcomes: $\breve{R}_d \leq C_d$ for every $d$, or there is a $d$ such that $\breve{R}_d > C_d$. In the former outcome, there is no transmission error and thus no outage. In the latter outcome, there is a transmission error and an outage. But there already was an outage when the set of old rates $\set{R_d}$ was used. We conclude that using the new set of rates $\set{\breve{R}_d}$ can only decrease the outage probability.
\end{itemize}


The analytical results obtained in this section are concerned with special cases: one transmitting \gls{ap} or one \gls{snr} for all devices. Moreover, our analysis provides closed-form expressions for the bounds on the probability of time overflow rather than the exact probability expression. Therefore, we turn to Monte-Carlo simulation to study the probability of outage events for a more general setting: randomly distributed field devices and multiple \glspl{ap}. Additionally, we compare the variable-rate outage probability to that of benchmark methods, and we compare the variable-rate outage probability to that of the modified variable-rate method. The results of this simulation are given in the next section along with a corresponding discussion.

\section{Numerical Results}\label{num}
In this section, we present numerical results generated by simulating the data phase of the communication protocol explained in \secref{sys}. We plot the probabilities of outage, transmission error and time overflow as a function of payload size for different training sequence lengths, transmit power and backoff values. Additionally, we compare the outage probability of the proposed variable-rate method with that of benchmark schemes.
 
\subsection{Simulation setup}
\paragraph*{Latency and reliability parameters} The most stringent motion control applications require a transmission time in the 0.25-1 ms range, while more tolerable processes might only require response times of the order of 1 ms \cite{neumann2007}. A typical factory automation system with up to 30 devices has a 1-2 ms transmission time, and a typical process control system with up to 200 field devices has a 10-15 ms transmission time \cite{schickhuber1997}. For our simulation, we choose a moderate aggregate transmission time of 1 ms. As for reliability, we choose a target outage probability of 10$^{-5}$.

\paragraph*{Miscellaneous parameters} We choose a conservative payload size of 50 Bytes (400 bits) per actuator, a total of 50 actuators, and an available bandwidth of 20 MHz. Putting together the number of actuators, their payload sizes, the available bandwidth and the transmission time, this translates into a \gls{dl} throughput of 1 bps/Hz. We suppose that the actuators and \glspl{ap} are scattered uniformly and independently over a floor area of $100\times 100$ m$^2$, an area similar to common areas of factories that have been surveyed in measurement campaigns \cite{hampicke1999,nist1951}.

\paragraph*{Path loss and blockage model} We assume a power law path loss model that gives power attenuation as a function of the distance from the transmitter, and a function of the link type: \gls{los} or \gls{nlos}. Let $\clos$ and $\cnlos$ be the reference path loss coefficients \gls{los} and \gls{nlos} links, and $\alos$ and $\anlos$ the path loss exponents. For a distance $r$, the path loss corresponding to \gls{los} and \gls{nlos} is
\begin{align}\label{eq:power_law}
\ell_{\text{L}}(r)=\clos r^{-\alos}, \; \ell_{\text{N}}(r)=\cnlos r^{-\anlos},
\end{align}
These four parameters have their values given in \cite{rappaport91}, also summarized in \tabref{t:params} along with the rest of the simulation parameter values. To determine the link type, we assume a blockage probability parametrized by a cutoff distance $d_0$ beyond which the probability of a link being \gls{los} becomes $p_0$:
\begin{align}
\plos(r) = p_0 + \indic{r\leq d_0}\,\frac{1-p_0}{d_0^2}\,(r-d_0)^2.
\end{align}

\paragraph*{Frequency band}We choose a carrier frequency of 3.5 GHz, also known as the \gls{cbrs}, where broadband networks in industrial applications are expected to deployed. Note that 20 MHz of bandwidth is consistent with the choice of the CBRS band, which is at least 50 MHz wide.

\begin{table}[t]
\centering
\caption{Values for simulation parameters.}
\label{t:params}
\begin{tabular}{ l  l}
\hline		
Parameter description & Value \\
\hline	
Transmission period, $T$ & 1 ms  \\
Number of actuators, $D$ & 50  \\
Number of \glspl{ap}, $A$  & 5 \\
Data per actuator, $B$ & 50 Bytes \\
Bandwidth, $W$ & 20 MHz \\
Floor area & 100$\times$100 m$^2$  \\
Backoff parameter & 0.8 \\
\gls{bs} transmit power & 23 dBm \\
Field actuator transmit power & 23 dBm \\
Noise power & -174 dBm/Hz \\
Carrier frequency & 3.5 GHz ($\lambda=$8.57 cm) \\
Path loss exponent ($\leq 10\lambda$) & 2 \\
LOS path loss exponent ($> 10\lambda$) & 3.26 \\
NLOS path loss exponent ($> 10\lambda$) & 3.93 \\
Blockage model: probability parameter $p_0$ & 0.25 \\
Blockage model: cutoff parameter $d_0$ & 15 m \\
\hline  
\end{tabular}
\end{table}

\subsection{Benchmark methods}\label{ss:bench}
\newcommand{\nomr}{\textsf{R}}		

We compare the outage probability for the variable-rate method to that for 4 benchmark methods described below.

\paragraph*{Cellular}\label{Sec:cellmodel}
Every \gls{ap} acts as a \gls{bs} in a cellular network with universal frequency reuse. Assuming equal cell sizes, every \gls{bs} is loaded with $D/B$ actuators, so every \gls{bs} broadcasts at a fixed rate $DB/AT$ bps.

\paragraph*{Fixed-rate} \label{Sec:wired_mbs}
According to the fixed-rate method, the \glspl{ap} use a single predetermined rate $DB/T$ bps to transmit to all of the actuators, one after another. An actuator fails to receive its data if its channel cannot support the predetermined rate (see our discussion of actuator failure in \secref{out}).

\paragraph*{Two-hop with cooperative relaying} 
This method takes place in two rounds. In the first round, one \gls{ap} uses a single predetermined rate $2DB/T$ bps to broadcast a packet with the data of all actuators; actuators who are able to decode the packet are deemed \textit{successful}. Should any actuators fail to receive the packet (suppose there are $D-k$ of those), the \gls{ap} cooperates with successful actuators to broadcast the original packet a second time at a reduced rate $2(D-k)B/T$ bps.

\subsection{Simulation Results}
 
\paragraph*{Training has opposite effects on the probabilities of transmission error and time overflow} With high quality channel estimation using long training sequences and an appropriate rate backoff, the probability that the chosen rate exceeds channel capacity decreases. This explains why the probability of transmission error decreases monotonically with the number of pilots as can be seen in \figref{fig:te_to}. Longer training sequences, however, increases channel sounding overhead, leaving less time for data transmission and increasing the chances of time overflow. This explains why the probability of time overflow monotonically increases with the number of pilots as can be seen in \figref{fig:te_to}. Since the probability of transmission error and the probability of time overflow have an opposite sense of variation with respect to length of training sequence, an optimal point corresponding to the minimal outage probability is expected to exist and does in fact exist as observed in \figref{fig:outage_vs_l}.

\begin{figure}[t]
\centering
\subfloat[$\prob{\evso}$]{
\includegraphics[width=0.4\textwidth]{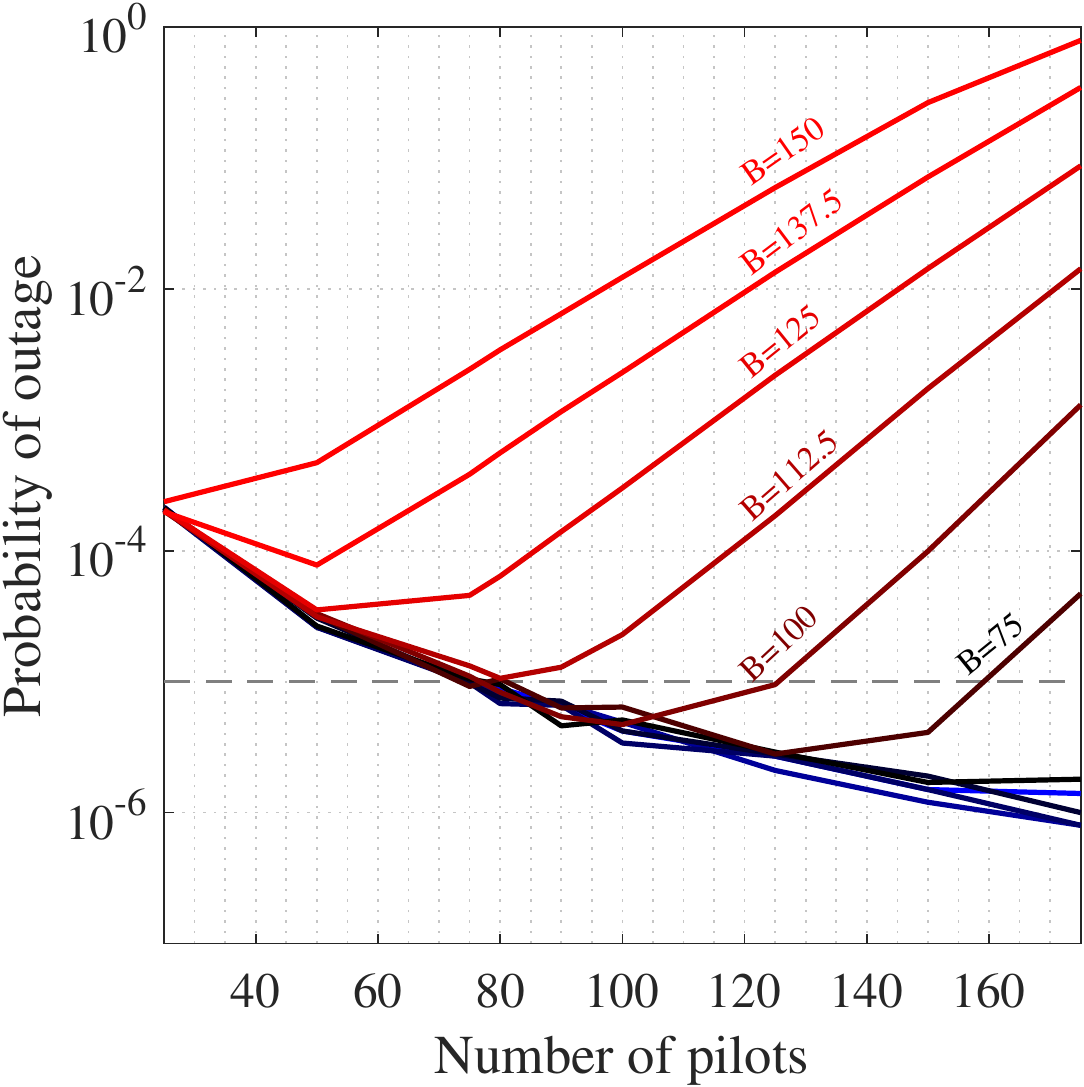}\label{fig:outage_vs_l}
}
\hfill
\subfloat[$\prob{\evte}$ and $\prob{\evto}$]{
\includegraphics[width=0.4\textwidth]{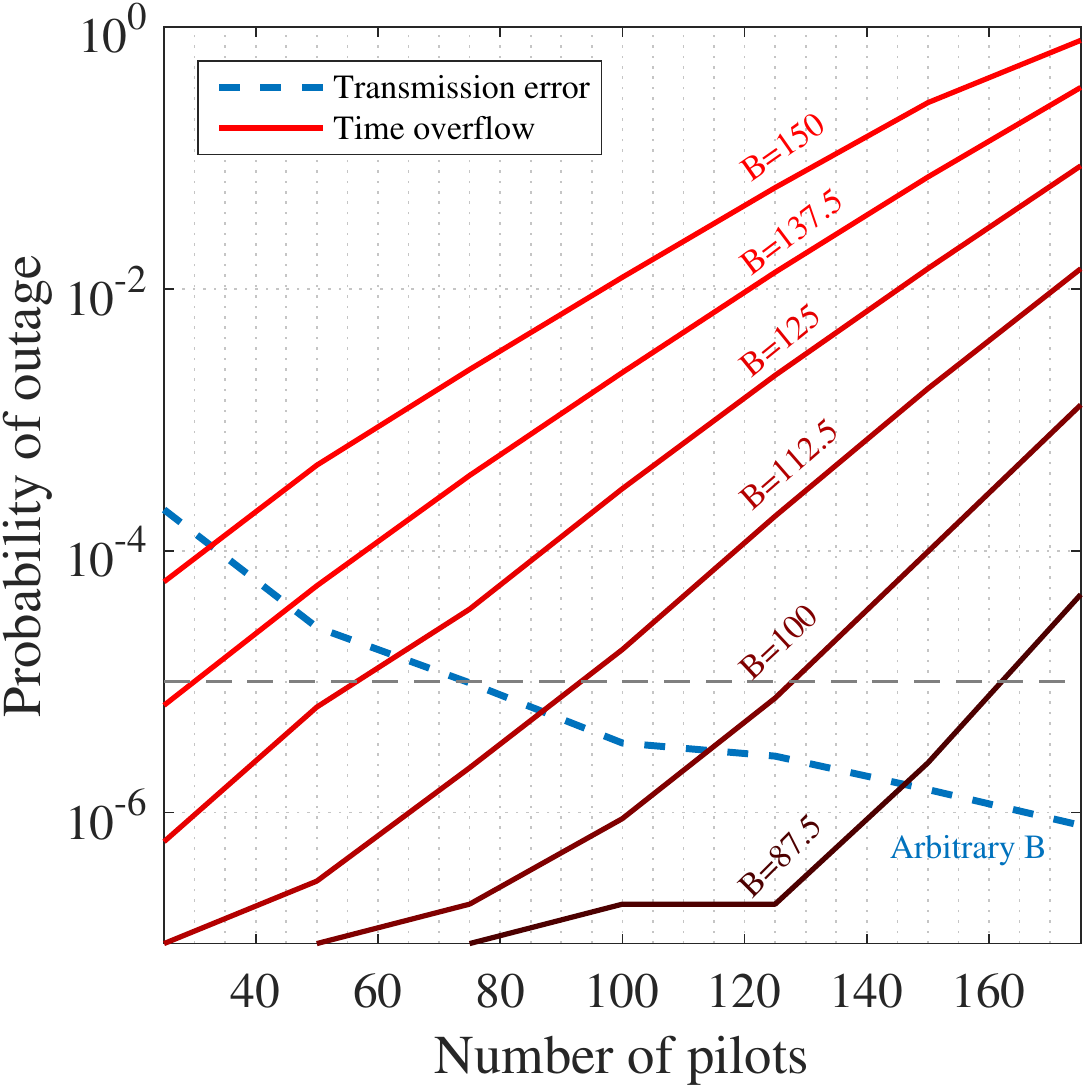}\label{fig:te_to}
}
\caption{\figref{fig:outage_vs_l} shows $\prob{\evso}$ as a function of the length of the training sequence per actuator for a range of payload sizes. This probability decreases till the number of pilots reach a \textit{training crossover point} after which it increases sharply. \figref{fig:te_to} shows $\prob{\evte}$ and $\prob{\evto}$. $\prob{\evte}$ decreases as the length of the training sequence increases. The probability curve for a payload of $50$ B only is included because curves for other payload sizes are almost identical. On the contrary, $\prob{\evto}$ increases as training becomes longer, suggesting that there is little time left for data transmission.}
\vspace{-20pt}
\end{figure}

\paragraph*{Outage is dominated by different events for different ranges of payload size} We can draw two further conclusions by jointly looking at \figref{fig:outage_vs_l}  and \figref{fig:te_to}, showing the probabilities of outage, transmission errors, and time overflow as a function of the length of the training sequence per actuator. Comparing \figref{fig:te_to} with \figref{fig:outage_vs_l}, we observe that outage is dominated by transmission error events for small payloads. Comparing \figref{fig:te_to} with \figref{fig:outage_vs_l}, we see that outage is dominated by time overflow for large payloads.

\begin{figure}[h]
\begin{minipage}[t]{0.4\textwidth}
\includegraphics[width=\linewidth,keepaspectratio=true]{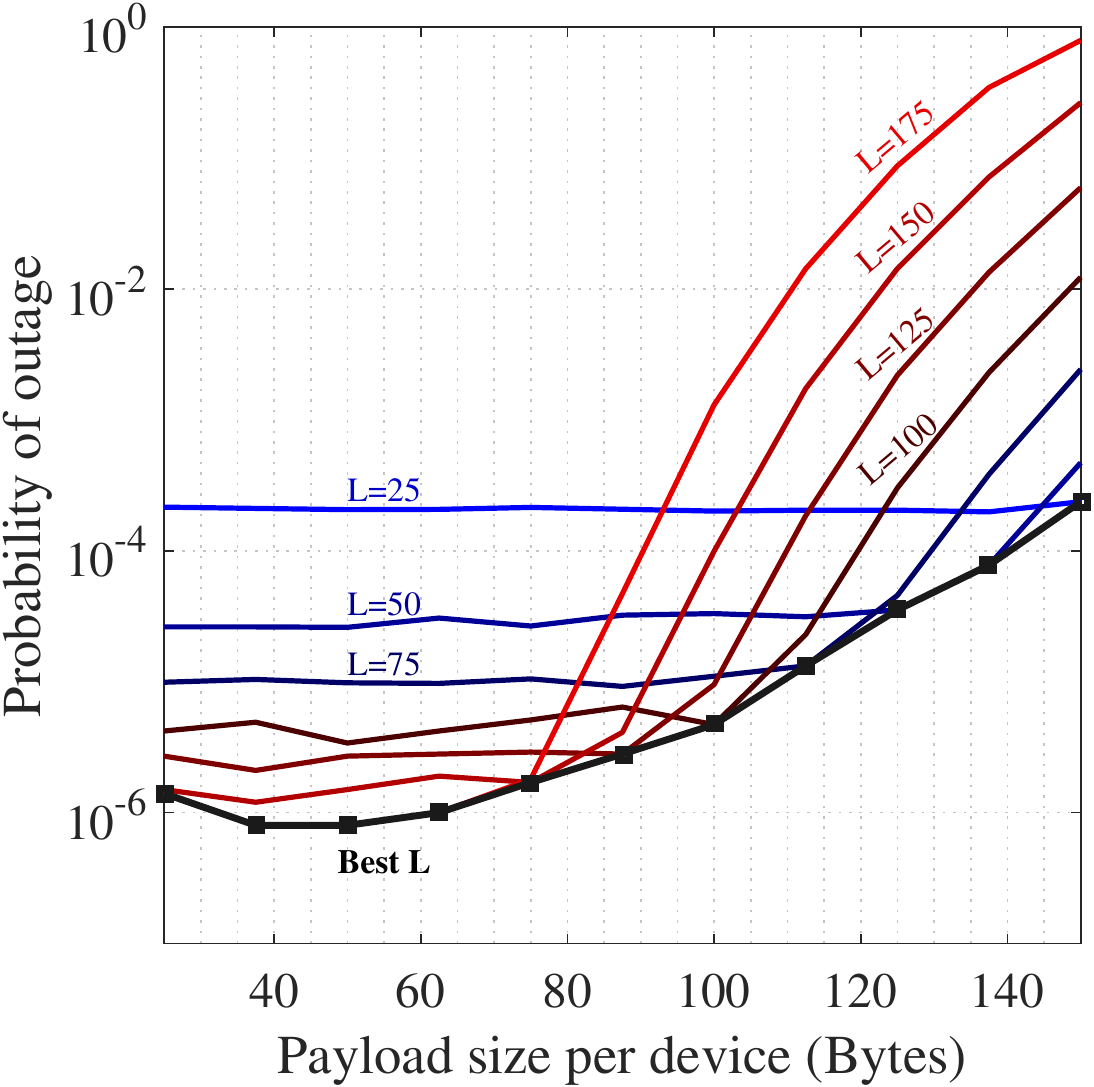}
\caption{The outage probability as a function of payload size per actuator for a range of training sequence lengths. This probability remains flat until the payload size reaches a \textit{payload crossover point}, beyond which the outage probability increases sharply.}
\label{fig:outage_vs_b}
\end{minipage}
\hspace*{\fill}
\begin{minipage}[t]{0.4\textwidth}
\includegraphics[width=\linewidth,keepaspectratio=true]{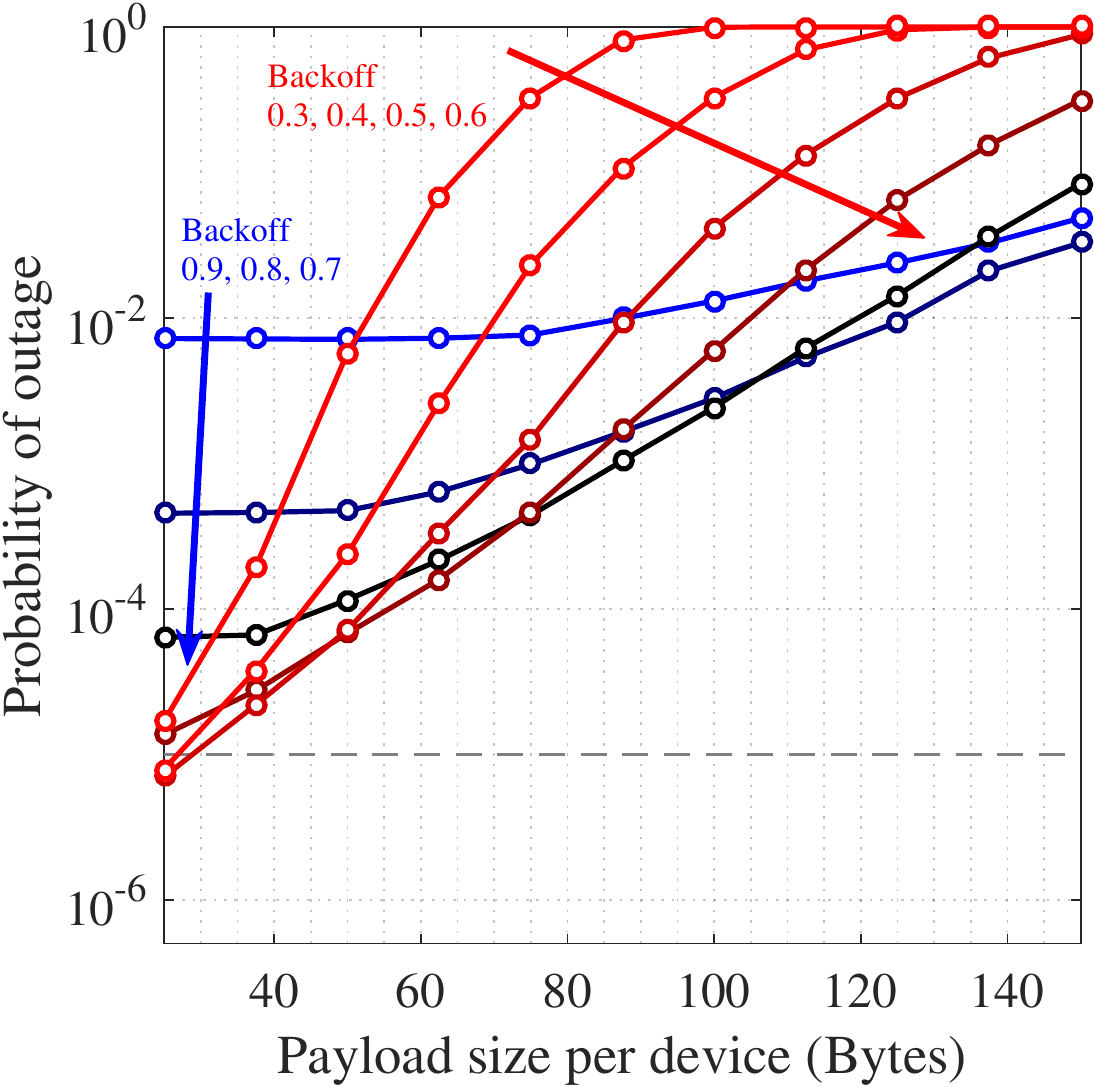}
\caption{The lowest outage probability as a function of payload size per actuator for different values of backoff.}
\label{fig:comp_backoff}	
\end{minipage}	
\end{figure}

\paragraph*{The main cause of outage sharply changes} Looking at \figref{fig:outage_vs_b} we observe that training decreases the outage probability for low payloads. However, we observe that training increases outage for high payloads. In relation to both \figref{fig:outage_vs_b} and \figref{fig:outage_vs_l}, we make the following observations. First, for a fixed training duration, the main cause of outage sharply transitions from transmission error to time overflow at a payload crossover point. Similarly, for a fixed payload size, the cause of outage sharply transitions from transmission error to time overflow at a training crossover point.

\paragraph*{There is an optimal backoff value for every payload size} \figref{fig:comp_backoff} shows the lowest outage probability as a function of payload size per actuator for the following values of backoff: 0.3, 0.4, 0.5, 0.6, 0.7, 0.8, and 0.9. For this plot, we considered 3 \glspl{ap} instead of the default 5 \glspl{ap}. The reason is that the lowest outage probability for some backoff-payload size pairs was exactly zero, i.e. there was not a single realization that produced an outage. In our simulations, we observe that for every payload size, there exists an optimal backoff value that minimizes the lowest outage probability. Put differently, for a particular payload size, the backoff parameter and the number of pilots can be chosen to reduce the outage probability.

\begin{figure}[h]
\centering
\includegraphics[width=0.4\textwidth]{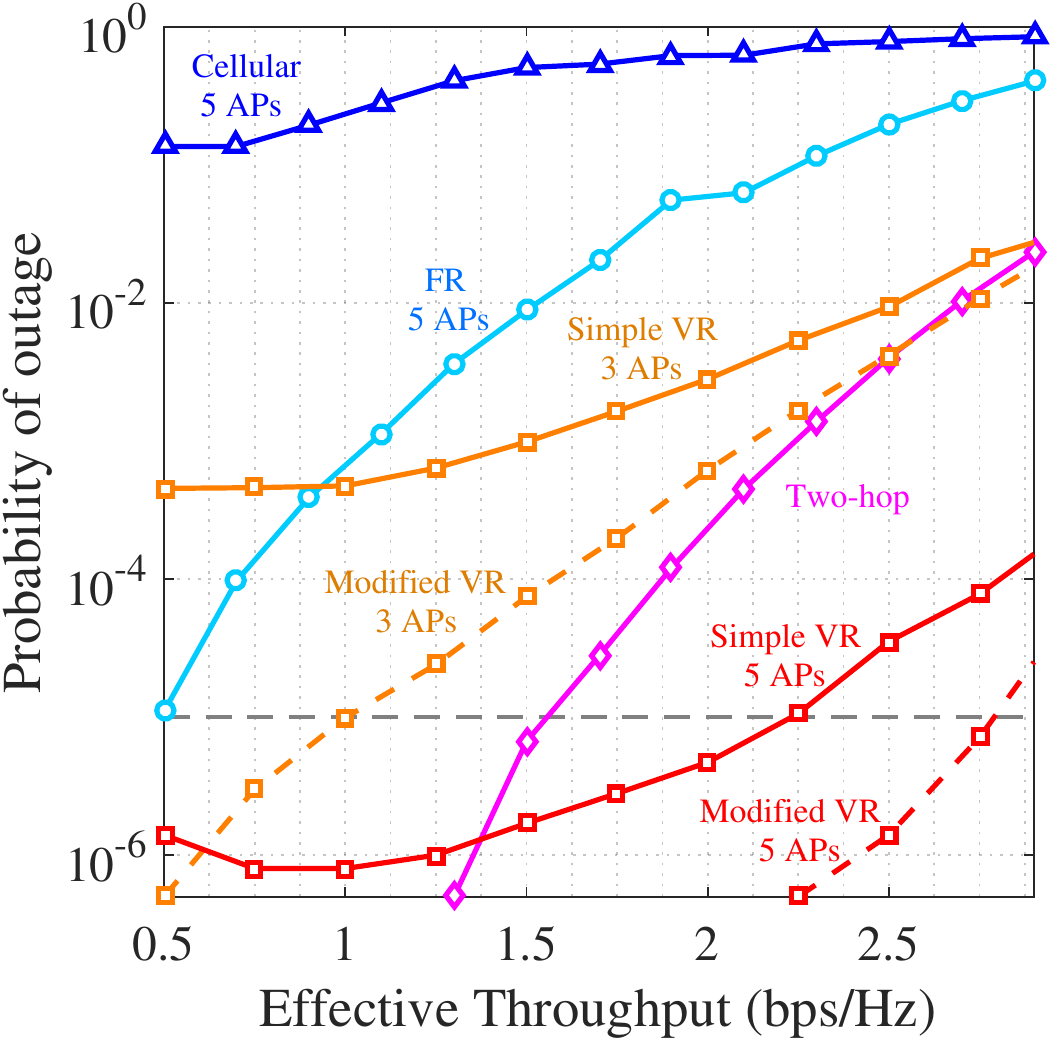}
\caption{Comparison of outage probabilities: simple variable-rate (VR), modified VR, fixed-rate (FR), cellular, and two-hop. In agreement with \secref{ss:reduce}, modified variable-rate has a lower outage probability than simple variable-rate.}
\vspace{-20pt}
\label{fig:comp1}
\end{figure}

\begin{figure}[h]
\centering
\subfloat[Comparison against SNR]{
\includegraphics[width=0.4\textwidth]{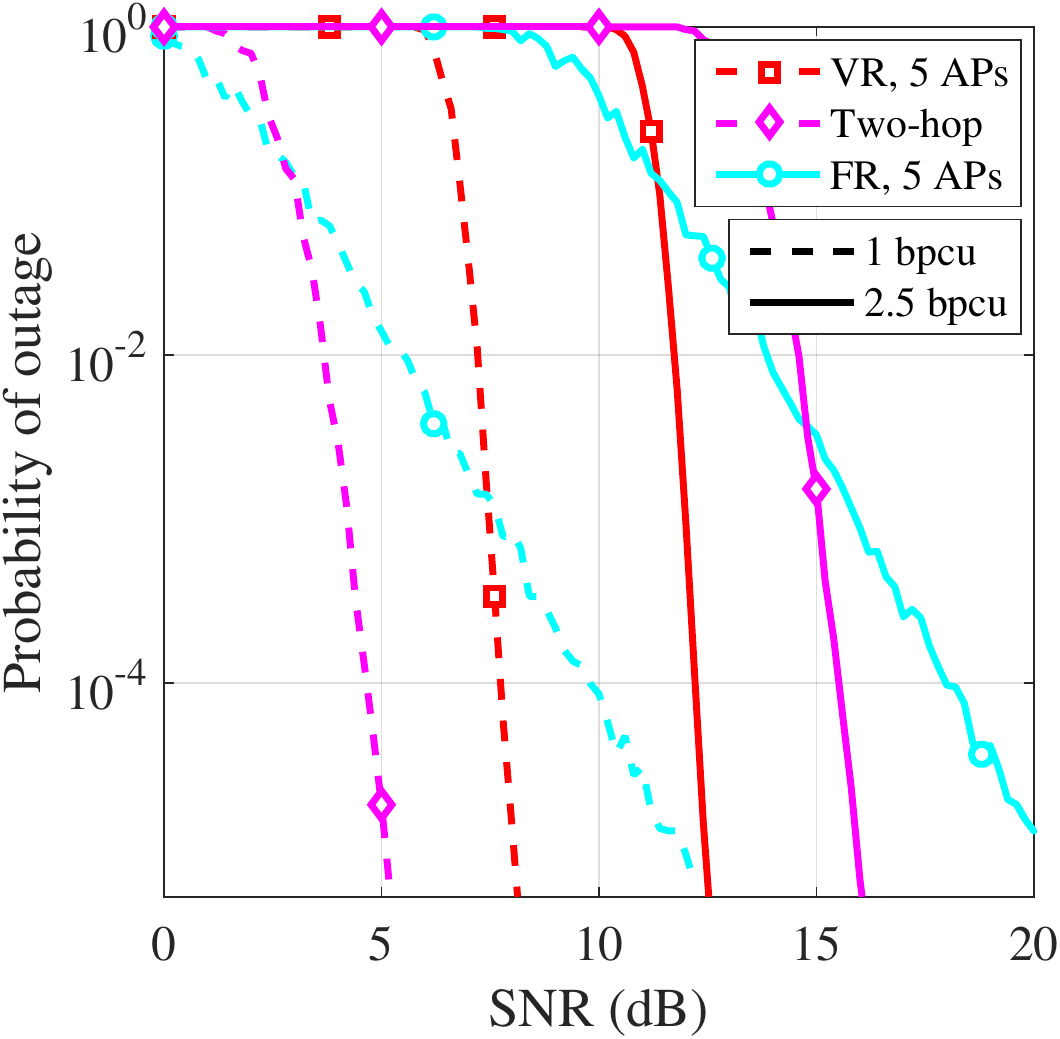}\label{fig:comp3}
}
\hfill
\subfloat[Finite-SNR diversity order]{
\includegraphics[width=0.4\textwidth]{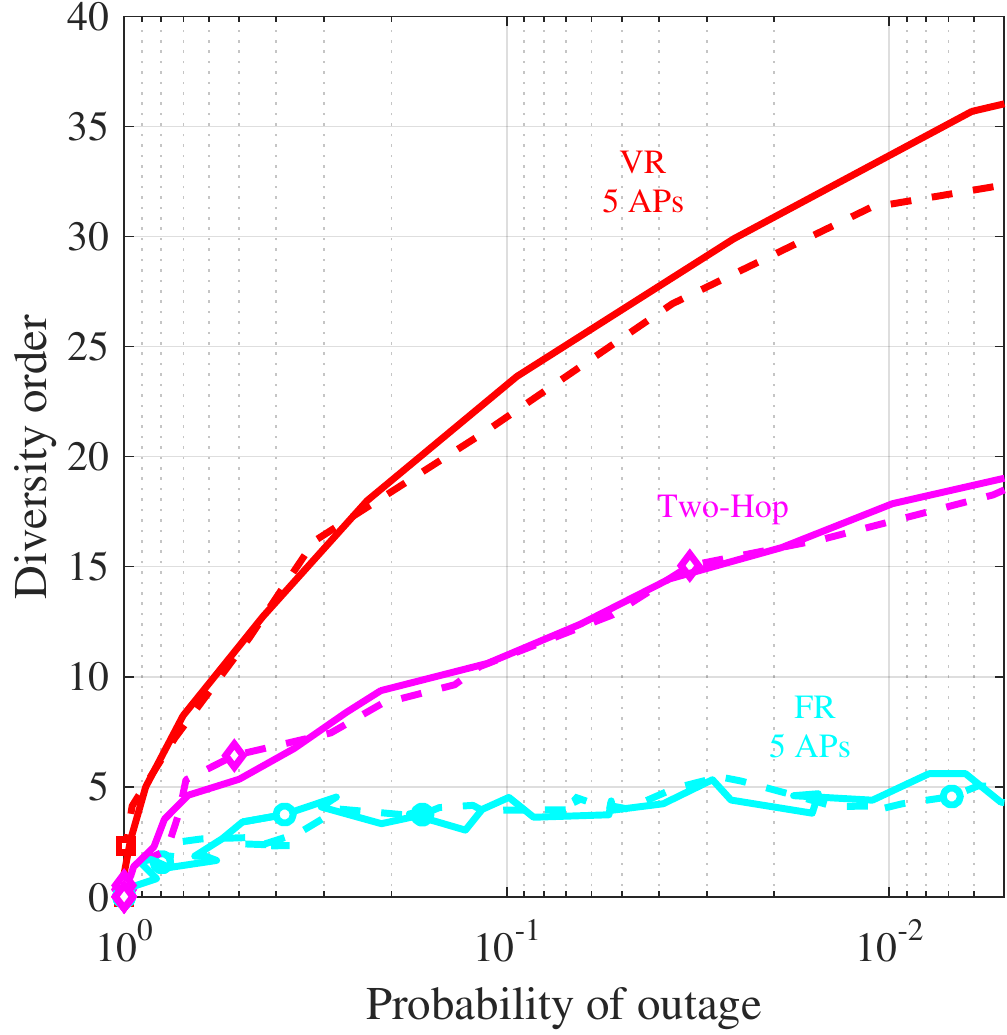}\label{fig:comp4}
}
\caption{\figref{fig:comp3} compares the outage probability of the variable-rate method, with that of the two-hop and fixed-rate schemes. The comparison is shown for two values of effective throughput, namely 1 and 2.5 bpcu. \figref{fig:comp4} compares the finite-SNR diversity order, derived based on the curves in \figref{fig:comp3}.}
\vspace{-20pt}
\end{figure}

\paragraph*{Modified variable-rate considerably outperforms simple variable-rate}
In agreement with \secref{ss:reduce}, modified variable-rate has lower outage probability than simple variable-rate as seen in \figref{fig:comp1}. Additionally, the spread of these probabilities widens as the payloads get smaller and shrinks as the payloads get bigger. For 3 \glspl{ap}, the spread is about 3 orders of magnitude for 20 B payloads. For 5 \glspl{ap}, the spread appears to be significantly wider.

\paragraph*{The adaptive-rate method outperforms benchmark methods}
\figref{fig:comp1} shows the outage probability as a function of effective throughput for the VR methods in comparison to benchmark methods introduced in \secref{ss:bench}. The effective throughput is defined as the total number of bytes transmitted by the \glspl{ap} divided by the total transmission duration.  
For VR, we plot the outage probability for 3 \glspl{ap} and 5 \glspl{ap}. For the cellular and FR we plot the outage probability for $A=5$. We observe that modified VR for $A=5$ outperforms all benchmarks and comes within the 10$^{-5}$ target outage probability.

\paragraph*{The adaptive-rate method has a larger diversity gain}
To focus on diversity gain of the transmission schemes, we adopt a simulation setup where all actuators exhibit a single nominal average \gls{snr}. \figref{fig:comp3} shows the outage probability against nominal \gls{snr}. We compare the outage probabilities between modified VR and benchmarks for an effective throughput of 1 and 2.5 bpcu, corresponding to $50$ B and $125$ B payloads. The corresponding empirical diversity orders, defined as the local negative slope of outage curves in log-log scale, are plotted in \figref{fig:comp4} against the outage probability. Modified VR requires considerably smaller transmit power to reach the target outage probability.  The comparison shows that regardless of packet size, the VR achieves a larger diversity gain compared to the two-hop method. This is an interesting observation given that in the two-hop method, all successful actuators help increase the cooperative diversity gain in the relaying phase whereas in VR has the cooperative diversity gain from only 5 \glspl{ap} and it is mainly reliant on multi-user scheduling to achieve a large diversity order. Comparing the diversity order of VR to that of FR is also insightful. Relying on cooperation of 5 \glspl{ap}, the diversity order of FR method quickly saturates at 5 while VR reaches a significantly higher diversity order.

\section{Conclusion}\label{conc}
We have modeled a multi-user, variable-rate communications system for industrial applications that require data to be received within a hard deadline. Motivated by the need to exploit spatial diversity to achieve ultra reliability, we proposed a pilot-assisted variable-rate method that exploits multi-user diversity. We have studied the probability of outage of the industrial system as caused by two events: transmission error due to the transmission at a rate that cannot be supported by the channel, and time overflow due to the inflexible downlink time budget. We have determined closed-form expressions for the probability of transmission error, a closed-form expression for the probability of time overflow when there are two field devices, and a number of upper and lower bounds for the probability of time overflow for an arbitrary number of field devices. Our simulation results have revealed several interesting phenomena. Longer training decreases the probability of transmission error but increases that of time overflow. Longer training decreases the probability of outage for small payloads but increases this probability for large payloads because payload size is proportional to  airtime. If the latter is too long, the downlink budget might be insufficient and an outage could occur due to time overflow. We have observed that for a fixed training sequence length, the main source of outage sharply changes at a payload crossover point. We have also observed that for a fixed payload, the main source of outage sharply changes at a training crossover point. These two observations suggest that there is an optimal training length for a set payload size, and an optimal payload size for a set training sequence length. In particular, our results show that in spite of the stringent delay constraint, it pays to learn the channel through pilot transmissions and adapt the transmission rate according to the channel quality. Adapting the transmission rate and hence the transmission duration for each device makes the outage probability a function of the channels of all device-\gls{ap} pair and hence multi-user diversity is exploited to achieve high reliability.

\appendices
\section{}\label{a:a}
\paragraph{Proof of \propref{p:err_1r_1t}}
Now, we remove the conditioning on the knowledge of $\hhat$ to determine $\prob{|\hhat|>|h|}$. Let $f$ be the \gls{pdf} of $|\hhat|^2$.
\begin{align}
\prob{|\hhat|>|h|} 
&= \expect{\prob{|\hhat|>|h| \given \hhat} } \\
&= \frac{1}{2} - \frac{1}{2}\int\limits_0^\pinf e^{-\frac{2}{\sige^2}}\,I_0\left(\frac{2}{\sige^2} u\right)\,
\frac{1}{1-\sige^2} e^{-\frac{1}{1-\sige^2} u} \diff u \\
&= \frac{1}{2} - \frac{1}{2(1-\sige^2)}  \int\limits_0^\pinf I_0\left(\frac{2}{\sige^2} u\right)\,
e^{-\frac{2-\sige^2}{\sige^2(1-\sige^2)}u} \diff u \\
&= \frac{1}{2} - \frac{1}{2(1-\sige^2)} \cdot 
\frac{1}{\sqrt{\left[\frac{2-\sige^2}{\sige^2(1-\sige^2)}\right]^2 - \left[\frac{2}{\sige^2}\right]^2}},
%
\end{align}
which leads to the result in \corref{p:err_1r_1t}.

\paragraph{Proof of \propref{p:err_1r_Dt}}
\newcommand{\rebalw}{{\frac{2}{\sige^2}\norm{\bfhhat}^2}}
Now redefine $R\triangleq \logof{1+\rho\norm{\bfhhat}^2}$. We first determine $\prob{R>C\given \bfhhat}$, the probability that $R>C$. This is equivalent to $\prob{\norm{\bfhhat}^2>\norm{\bfh}^2 \given \bfhhat}$, which is given as
\begin{align}
\prob{\norm{\bfhhat}>\norm{\bfh} \given \bfhhat} 
&= 1-\Qmof{\frac{\sqrt{2}}{\sige}\norm{\bfhhat},\frac{\sqrt{2}}{\sige}\norm{\bfhhat|}}{A}
\\
&= \frac{1}{2}\left[1-e^{-\rebalw}I_0\left(\rebalw\right)\right]
- e^{-\rebalw} \, \sum_{k=1}^{A-1} I_k\left(\rebalw\right),
\end{align}
where the last equality is given in \cite{marcum}.

\renewcommand{\rebalw}{{\frac{2}{\sige^2}u}}
\noindent Note that $\norm{\bfhhat}^2\sim\text{Gamma}\left(A,\frac{1}{1-\sige^2}\right)$. Redefine $f$ to be the \gls{pdf} of $\norm{\bfhhat}^2$. Hence,
\begin{align}
\prob{\norm{\bfhhat}>\norm{\bfh}}
&= \frac{1}{2} - \frac{1}{2}\int\limits_0^\pinf \frac{1}{\Gamma(A)\left(1-\sige^2\right)^A}\, u^{A-1}\, I_0\left(\rebalw\right)\,e^{-\ctep u} \diff u \\
&+ \;\;\;\;\; \sum_{k=1}^{A-1} \int\limits_0^\pinf \frac{1}{\Gamma(A)\left(1-\sige^2\right)^A}\, u^{A-1}\, I_k\left(\rebalw\right)\,e^{-\ctep u} \diff u.
\end{align}

\noindent These two integrals can be seen as Laplace transforms of transcendental functions, and \textit{Tables of Integrals and Transforms, Vol. I} \cite{bateman1}, gives them the following closed form expression:
\begin{align}
\lapvarof{u^{\mu}\,I_\nu(au)}{p} 
&= \Gamma(\mu+\nu+1) \, \left(p^2-a^2\right)^{-\mu-1} P_\mu^{-\nu}\left(p/\sqrt{p^2-a^2}\right),
\end{align}
where $P_\mu^{-\nu}$ is the associated Legendre function for all $\mu,\nu$ that satisfy $\text{Re}\{\mu+\nu\}>-1$, and all $a,p$ that satisfy $p>a$. After some simplification, we finally obtain the result in \propref{p:err_1r_Dt}.

\section{}\label{a:b}
\paragraph{Proof of \propref{p:to_2r}}
We have defined $R=\log(1+|\hhat|^2\rho)$. Let $Y\triangleq 1/R$, and let $F_Y$ and $f_Y$ be the \gls{cdf} and \gls{pdf} of $Y$. For $y\geq 0$,
\begin{align}
F_Y(y) 
&= 1-\prob{R\leq \frac{1}{y}} \\
&= e^{ -\frac{1}{\rho\varhh}\left(2^{1/y}-1\right) }.
\end{align}
\noindent Let $\rho'\triangleq \rho\varhh$ be the measurement \gls{snr} and $G\triangleq 1/\rho'$. By taking the first derivative of $F_Y$, it follows that the \gls{pdf} is
\begin{align}
f_Y(y) = \left(G e^{G }\ln 2\right) \, \frac{2^{\frac{1}{y}}e^{-G 2^{1/y}}}{y^2}, \; y\geq 0.
\end{align} 

\renewcommand{\rebalw}{\left(G e^{G}\ln 2\right)}
\newcommand{\rebaltw}{\left(G e^{2G}\ln 2\right)}
\noindent Now let $Y_1, Y_2 \sim Y$. For $y\geq 0$
\begin{align}
\prob{Y_1+Y_2\leq y}
&= \int\limits_0^y f_Y(x) F_Y(y-x) \diff x \\
&= \rebalw \int\limits_0^y e^{ -G\left(2^{1/y-x}-1\right) } \;
2^{\frac{1}{x}}e^{-G 2^{1/x} } \frac{1}{x^2} \diff x \\
&= \rebaltw \int\limits_0^y e^{ -G\left(2^{1/x} + 2^{1/y-x}\right) } \;
2^\frac{1}{x} \; \frac{1}{x^2} \diff x
\intertext{with the change of variables $v=2^{1/x}$,}
&= \rebaltw \int\limits_{2^{1/y}}^\pinf \expof{-G\left(v+2^{\frac{1}{y-\ln 2/\ln v}}\right)} \, v\, \left(\frac{\ln v}{\ln 2}\right)^2 \frac{\ln 2}{v (\ln v)^2} \, \diff v \\
&= \frac{1}{\rho\varhh} \int\limits_{2^{1/y}}^\pinf
\expof{ -\frac{v+v^{1/y\log v - 1}-2}{\rho\varhh} } \, \diff v .
\end{align}

\noindent Noting that $v^{1/y\log v - 1}\geq v^{1/y\log v}=2^{1/y}$, we get the upper bound in \propref{p:to_2r}. 

\noindent As for the lower bound, we first note that the function $\varphi(v)=\expof{-Gv^{1/y\log v}}$ is strictly increasing over $(2^{\frac{1}{y}},\pinf)$ since it is a composition of two strictly decreasing functions (note that the power function is strictly increasing). For a small $\epsilon>0$, we write
\begin{align}
\prob{Y_1+Y_2\leq y} 
&= G\int\limits_{2^{1/y}}^\pinf 
\expof{ -G\left( v+v^{1/y\log v - 1}-2 \right) } \, \diff v \\
&\geq Ge^{2G}\int\limits_{2^{1/y+\epsilon}}^\pinf
\expof{ -G\left( v+v^{1/y\log v - 1} \right) } \, \diff v \\
&\geq Ge^{2G}\int\limits_{2^{1/y+\epsilon}}^\pinf
\expof{ -G\left( v+(2^{1/y+\epsilon})^{1/y\epsilon} \right) } \, \diff v,
\intertext{taking $\epsilon=1/y$, we get}
&= Ge^{2G-G2^{2/y}}\int\limits_{2^{1/y+\epsilon}}^\pinf
e^{-Gv} \, \diff v,
\end{align}
which leads to the lower bound given in \propref{p:to_2r}.

\paragraph{Proof of \propref{p:to_nr}}
Let $q_n\triangleq \prob{\sum_{k=1}^n Y_k \leq y}$. We assume that
\begin{align}
q_{n-1}(y) \leq \expof{-(n-1)\,G\left(2^{1/y}-1\right)}.
\end{align}
This is true for orders $1$ and $2$, i.e $n-1=1,2$. To prove for a general $n$, we prove the inductive step as follows:
\begin{align}
q_n(y)
&= \int\limits_0^y  q_{n-1}(x) f_Y(y-x)  \diff x \\
&\leq \rebalw \int\limits_0^y \expof{-(n-1)\,G\left(2^{1/x}-1\right)} \;
2^{\frac{1}{y-x}}e^{-G(2^{1/y-x})} \frac{1}{(y-x)^2} \diff x\\
\intertext{with the change of variables $v=2^{1/(y-x)}$,}
&= \left(G e^{nG}\right) \int\limits_{2^{1/y}}^\pinf \;
\expof{- (n-1)\,G  2^\frac{\log v}{y\log v -1} } \, e^{-G v} \diff v \\
&\leq  \left(G e^{nG}\right) \int\limits_{2^{1/y}}^\pinf \;
\expof{- (n-1)\,G \, 2^\frac{1}{y} } \, e^{-G v} \diff v \\
&\leq G \; \expof{- (n-1)\,G \, 2^\frac{1}{y} + n\,G} \;
\int\limits_{2^{1/y}}^\pinf e^{-G v} \diff v,
\end{align}
which leads to the upper bound. As for the lower bound, we assume that 
\begin{align}
q_{n-1}(y) \leq \expof{-(n-1)\,G\left(2^{(n-1)/y}-1\right)}.
\end{align}
Note that this is also true for orders $1$ and $2$. We show the inductive step as follows:
\begin{align}
q_n(y)
&\geq \rebalw \int\limits_0^y \expof{-(n-1)\,G\left(2^{(n-1)/x}-1\right)} \;
2^{\frac{1}{y-x}}e^{-G(2^{1/y-x})} \frac{1}{(y-x)^2} \diff x\\
\intertext{with the change of variables $v=2^{1/(y-x)}$,}
&= \left(G e^{nG}\right) \int\limits_{2^{1/y}}^\pinf \;
\expof{- (n-1)\,G  v^\frac{n-1}{y\log v -1} } \, e^{-G v} \diff v \\
&\geq \left(G e^{nG}\right) \int\limits_{2^{1/y+\epsilon}}^\pinf \;
\expof{- (n-1)\,G  v^\frac{n-1}{y\log v -1} } \, e^{-G v} \diff v \\
&\geq \left(G e^{nG}\right) \int\limits_{2^{1/y+\epsilon}}^\pinf \;
\expof{- (n-1)\,G  (2^{1/y+\epsilon})^\frac{n-1}{y\epsilon} } \, e^{-G v} \diff v, \\
\intertext{taking $\epsilon=(n-1)/y$,}
&= \expof{- (n-1)\,G  2^{n/y} +nG}
\int\limits_{2^{n/y}}^\pinf  e^{-G v} \diff v,
\end{align}
which leads to the lower bound.

\section{}\label{a:c}
\paragraph{Proof of \propref{p:to_2r_bnd}}
First, we make the following claim:
\begin{equation}\label{eq:bold_claim}
e^{-G v^\frac{1}{y\log v-1}} \leq e^{-G 2^{1/y}} \, \left( 1 - e^{-A(v-2^{1/y})} \right),\;\text{for all $v\geq 2^{1/y}$},
\end{equation}
for some $A \equiv A(y,\,G) \geq 0$ to be determined. We proceed to derive the bound on $\prob{Y_1+Y_2\leq y}$ starting with \eqref{eq:to_2r}.
\begin{align}
\prob{Y_1+Y_2\leq y} 
&= G \int\limits_{2^{1/y}}^\pinf
\expof{ -G\left(v+v^{1/y\log v - 1}-2\right) } \, \diff v \\
&\leq G e^{2G}e^{-G 2^{1/y}} \int\limits_{2^{1/y}}^\pinf 
e^{-G v}\,\left(1 - e^{-A(v-2^{1/y})}\right) \,\diff v \\
&= G e^{2G}e^{-G 2^{1/y}} \int\limits_{2^{1/y}}^\pinf e^{-G v} \,\diff v
\;\;+\;\; G e^{2G}e^{(A-G)2^{1/y}} \int\limits_{2^{1/y}}^\pinf e^{-(A+G)v} \,\diff v \\
&= e^{2G}e^{-2G 2^{1/y}} - \left(\frac{G}{A+G}\right) e^{2G} e^{(A-G)2^{1/y}} e^{-(A+G)2^{1/y}} \\
&= \left(1-\frac{G}{A+G}\right) e^{2G} e^{-2G 2^{1/y}}.
\end{align}
Finally,
\begin{equation}
\prob{Y_1+Y_2\leq y}  \leq \left(\frac{A}{A+G}\right) e^{-2G(2^{1/y}-1)}.
\end{equation}

Now that we have tightened our bound on the probability, we prove the claim that we have made. We first rewrite the inequality in \eqref{eq:bold_claim} as
\begin{align}
\expof{-2^\frac{\ln v}{y\ln v - \ln 2}} 
&\leq e^{-G 2^{1/y}} - \expof{-v\left(A-\frac{(A-G)2^{1/y}}{v}\right)}   \\
\expof{-v\left(A-\frac{(A-G)2^{1/y}}{v}\right)}  
&\leq  e^{-G 2^{1/y}} - \expof{-2^\frac{\ln v}{y\ln v - \ln 2}} \\
\expof{-v\,\Psi(v)} &\leq \expof{-v\,\Phi(v)}.
\end{align}
Now, we determine $\Phi(v)$.
\begin{align}
\expof{-v\,\Phi(v)}
&= \expof{\lnof{e^{-G 2^{1/y}} - \expof{-2^\frac{\ln v}{y\ln v - \ln 2}}}} \\
&= \expof{-v\cdot \frac{-1}{v}\lnof{e^{-G 2^{1/y}} - \expof{-2^\frac{\ln v}{y\ln v - \ln 2}}}}.
\end{align}
Since the exponential function is increasing, a necessary condition for \eqref{eq:bold_claim} is $\Psi(v)\geq \Phi(v)$, for all $v\geq 2^{1/y}$. After a few steps of calculation, we obtain the following condition:
\begin{align}
A \geq \left(\frac{1}{2^{1/y}-v}\right)\,\lnof{1- \expof{G 2^\frac{1}{y}-G 2^\frac{\ln v}{y\ln v - \ln 2}}}.
\end{align}

\noindent We still need to prove that such an $A$ does exist. We write the \gls{rhs} as a product of two functions $f$ and $g$, where $f$ is negative and monotonically increasing towards $0$, and $g$ is also negative but monotonically decreasing towards $0$. It is clear that $\lim_{v\rightarrow 2^{1/y}}=0$. To show that $\lim_{v\rightarrow \,+\infty}=0$, we replace $g$ with a function $h$ that goes faster to $\ninf$ as its argument goes to $0$. One candidate is $h(x)=-1/x$ because $\lim_{x\rightarrow 0} \ln(x)/(-1/x) = 0$. Hence,
\begin{align}
\lim_{v\rightarrow\,\pinf} f(v)g(v)
&=\lim_{v\rightarrow\pinf} \left(\frac{1}{2^{1/y}-v}\right)\,\lnof{1- \expof{G 2^\frac{1}{y}-G 2^\frac{\ln v}{y\ln v - \ln 2}}} \\
&\leq 
\lim_{v\rightarrow\,\pinf} \frac{1}{G 2^{1/y}-v}\cdot\frac{-1}{{1- \expof{G 2^\frac{1}{y}-G 2^\frac{\ln v}{y\ln v - \ln 2}}}} \\
&=
\lim_{v\rightarrow\,\pinf} \frac{1}{v}\cdot\frac{e^{-G 2^{1/y}}}{e^{-G 2^{1/y}} - \expof{-G 2^\frac{\ln v}{y\ln v - \ln 2}}} \\
&=
\lim_{v\rightarrow\,\pinf} \frac{1}{v}\cdot\frac{1}{e^{-G 2^{1/y}} - \expof{-G 2^\frac{\ln v}{y\ln v - \ln 2}}} \\
&=
\lim_{x\rightarrow\,\pinf} \frac{e^{-x}}{e^{-G 2^{1/y}} - \expof{-G 2^\frac{x}{yx - \ln 2}}}.
\end{align}
By using l'Hopital's rule, we get
\begin{align}
\lim_{v\rightarrow\,\pinf} f(v)g(v)
&=
\lim_{x\rightarrow\,\pinf} \frac{e^{-x}\;(yx-\ln  2)^2}{ G(\ln 2)^2 \, 2^{\frac{x}{yx-\ln 2}}\, e^{-G 2^{\frac{x}{yx-\ln 2}}}} \phantom{\;\;\;\;\;\;\;}\\
&=\frac{\lim_{x\rightarrow\,\pinf} e^{-x}\;(yx-\ln  2)^2}{G(\ln 2)^2 \, 2^\frac{1}{y} \, e^{-G 2^{1/y}}} \; = 0.
\end{align}

\noindent Since $fg$ does not admit any singularities on $(2^{1/y},\pinf)$, $fg$ is differentiable on that interval. Since, additionally, $\lim_{v\rightarrow 2^{1/y}} f(v)g(v)$  $=\lim_{v\rightarrow\,\pinf} f(v)g(v)=0$, Rolle's theorem says that $fg$ must have a critical point on $(2^{1/y},\pinf)$. Finally, since $fg\geq 0$, we conclude that that critical point is in fact a maximum; we choose $A$ to be that.

\section{}\label{a:d}
\paragraph{Proof of \propref{p:sandwich}}
We first show that the lower bound holds. If for all $d$, $1/R_d > \mytd/DB$, then clearly $\sum_{d=1}^D 1/R_d > \sum_{d=1}^D \mytd/DB = \mytd/B$. Therefore
\begin{align}
\begin{split}
\prob{\sum_{d=1}^D \frac{1}{R_d} > \frac{\mytd}{B}}
&\geq
\prob{\bigcap_{d=1}^D \left\{\frac{1}{R_d} > \frac{\mytd}{DB}\right\}} \\
&= \prod_{d=1}^D \prob{R_d < \frac{DB}{\mytd}},
\end{split}
\end{align}
where the last equality follows because $\set{R_d}$ are independent random variables. In a similar way, we show that the upper bound holds. If $\sum_{d=1}^D 1/R_d > \mytd/B$, then $\exists d$ s.t. $R_d < DB/\mytd$. To see that this is true, assume otherwise. If $R_d \geq DB/\mytd$ for all $d$, then $\sum_{d=1}^D 1/R_d \leq \mytd/B$, a contradiction. Therefore
\begin{align}
\begin{split}
\prob{\sum_{d=1}^D \frac{1}{R_d} > \frac{\mytd}{B}}
&\leq
\prob{\bigcup_{d=1}^D \left\{R_d < \frac{DB}{\mytd} \right\}} \\
&=
1-\prod_{d=1}^D \prob{\bigcap_{d=1}^D \left\{R_d > \frac{DB}{\mytd} \right\}}.
\end{split}
\end{align}
\bibliographystyle{IEEEtran}
\bibliography{IEEEabrv,ref}

\begin{thebibliography}{10}
\providecommand{\url}[1]{#1}
\csname url@samestyle\endcsname
\providecommand{\newblock}{\relax}
\providecommand{\bibinfo}[2]{#2}
\providecommand{\BIBentrySTDinterwordspacing}{\spaceskip=0pt\relax}
\providecommand{\BIBentryALTinterwordstretchfactor}{4}
\providecommand{\BIBentryALTinterwordspacing}{\spaceskip=\fontdimen2\font plus
\BIBentryALTinterwordstretchfactor\fontdimen3\font minus
  \fontdimen4\font\relax}
\providecommand{\BIBforeignlanguage}[2]{{%
\expandafter\ifx\csname l@#1\endcsname\relax
\typeout{** WARNING: IEEEtran.bst: No hyphenation pattern has been}%
\typeout{** loaded for the language `#1'. Using the pattern for}%
\typeout{** the default language instead.}%
\else
\language=\csname l@#1\endcsname
\fi
#2}}
\providecommand{\BIBdecl}{\relax}
\BIBdecl

\bibitem{simsek2016}
M.~Simsek, A.~Aijaz, M.~Dohler, J.~Sachs, and G.~Fettweis, ``5{G}-enabled
  tactile internet,'' \emph{IEEE J. Sel. Areas Commun.}, vol.~34, no.~3, pp.
  460--473, March 2016.

\bibitem{shafi2017}
M.~Shafi, A.~F. Molisch, P.~J. Smith, T.~Haustein, P.~Zhu, P.~D. Silva,
  F.~Tufvesson, A.~Benjebbour, and G.~Wunder, ``5{G}: A tutorial overview of
  standards, trials, challenges, deployment, and practice,'' \emph{IEEE J. Sel.
  Areas Commun.}, vol.~35, no.~6, pp. 1201--1221, June 2017.

\bibitem{brooks2001}
T.~Brooks, ``Wireless technology for industrial sensor and control networks,''
  in \emph{Proc. IEEE SICON}, Nov 2001, pp. 73--77.

\bibitem{willig2005}
A.~Willig, K.~Matheus, and A.~Wolisz, ``Wireless technology in industrial
  networks,'' \emph{Proc. IEEE}, vol.~93, no.~6, pp. 1130--1151, June 2005.

\bibitem{schickhuber1997}
G.~Schickhuber and O.~McCarthy, ``Distributed fieldbus and control network
  systems,'' \emph{Computing Control Engineering Journal}, vol.~8, no.~1, pp.
  21--32, Feb 1997.

\bibitem{barbosa2012}
R.~R.~R. Barbosa, R.~Sadre, and A.~Pras, ``A first look into {SCADA} network
  traffic,'' in \emph{Proc. IEEE NOMS}, April 2012, pp. 518--521.

\bibitem{thomesse2005}
J.~P. Thomesse, ``Fieldbus technology in industrial automation,'' \emph{Proc.
  IEEE}, vol.~93, no.~6, pp. 1073--1101, June 2005.

\bibitem{moyne2007}
J.~R. Moyne and D.~M. Tilbury, ``The emergence of industrial control networks
  for manufacturing control, diagnostics, and safety data,'' \emph{Proc. IEEE},
  vol.~95, no.~1, pp. 29--47, Jan 2007.

\bibitem{randa:2012}
R.~Zakhour and S.~V. Hanly, ``Base station cooperation on the downlink: Large
  system analysis,'' \emph{IEEE Trans. Info. Theory}, vol.~58, no.~4, pp.
  2079--2106, April 2012.

\bibitem{randa:2010}
R.~Zakhour and D.~Gesbert, ``Distributed multicell-{MISO} precoding using the
  layered virtual {SINR} framework,'' \emph{IEEE Trans. Wirel. Commun.},
  vol.~9, no.~8, pp. 2444--2448, August 2010.

\bibitem{amitava:2010}
A.~Ghosh, R.~Ratasuk, B.~Mondal, N.~Mangalvedhe, and T.~Thomas,
  ``{LTE}-advanced: next-generation wireless broadband technology,'' \emph{IEEE
  Wirel. Commun.}, vol.~17, no.~3, pp. 10--22, June 2010.

\bibitem{wang:2013}
D.~Wang, J.~Wang, X.~You, Y.~Wang, M.~Chen, and X.~Hou, ``Spectral efficiency
  of distributed {MIMO} systems,'' \emph{IEEE J. Sel. Areas Commun.}, vol.~31,
  no.~10, pp. 2112--2127, October 2013.

\bibitem{zhang:2008}
J.~Zhang and J.~G. Andrews, ``Distributed antenna systems with randomness,''
  \emph{IEEE Transactions on Wireless Communications}, vol.~7, no.~9, pp.
  3636--3646, September 2008.

\bibitem{zhuang:2003}
H.~Zhuang, L.~Dai, L.~Xiao, and Y.~Yao, ``Spectral efficiency of distributed
  antenna system with random antenna layout,'' \emph{Electronics Letters},
  vol.~39, no.~6, pp. 495--496, March 2003.

\bibitem{willig2008}
\BIBentryALTinterwordspacing
A.~Willig, ``How to exploit spatial diversity in wireless industrial
  networks,'' \emph{Annual Reviews in Control}, vol.~32, no.~1, pp. 49 -- 57,
  2008. [Online]. Available:
  \url{http://www.sciencedirect.com/science/article/pii/S1367578808000059}
\BIBentrySTDinterwordspacing

\bibitem{girs2013}
S.~Girs, E.~Uhlemann, and M.~Bjorkman, ``Increased reliability or reduced delay
  in wireless industrial networks using relaying and luby codes,'' in
  \emph{Proc. ETFA}, Sept 2013, pp. 1--9.

\bibitem{weiner2014}
M.~Weiner, M.~Jorgovanovic, A.~Sahai, and B.~Nikolic, ``Design of a
  low-latency, high-reliability wireless communication system for control
  applications,'' in \emph{Proc. IEEE ICC}, 2014, pp. 3829--3835.

\bibitem{swamy2015}
V.~N. Swamy, S.~Suri, P.~Rigge, M.~Weiner, G.~Ranade, A.~Sahai, and B.~Nikolic,
  ``Cooperative communication for high-reliability low-latency wireless
  control,'' in \emph{Proc. IEEE ICC}, June 2015, pp. 4380--4386.

\bibitem{dombrowski2015}
C.~Dombrowski and J.~Gross, ``Echo{R}ing: A low-latency, reliable token-passing
  {MAC} protocol for wireless industrial networks,'' in \emph{Proc. European
  Wireless Conference}, May 2015, pp. 1--8.

\bibitem{chen2017}
Z.~Chen, A.~Liu, Z.~Li, Y.~Choi, H.~Sekiya, , and J.~Li, ``Energy-efficient
  broadcasting scheme for smart industrial wireless sensor networks,''
  \emph{Mobile Information Systems}, 2017.

\bibitem{derya17}
D.~Malak, H.~Huang, and J.~G. Andrews, ``Throughput maximization for
  delay-sensitive random access communication,'' 2017, [Online]. Available:
  \url{arxiv.org/abs/1711.02056}.

\bibitem{schiessl2017}
S.~Schiessl, H.~Al-Zubaidy, M.~Skoglund, and J.~Gross, ``Delay performance of
  wireless communications with imperfect {CSI} and finite length coding,'' June
  2017, [Online]. Available: \url{arxiv.org/abs/1608.08445}.

\bibitem{vu2017}
T.~K. Vu, C.~F. Liu, M.~Bennis, M.~Debbah, M.~Latva-aho, and C.~S. Hong,
  ``Ultra-reliable and low latency communication in mm{W}ave-enabled massive
  {MIMO} networks,'' \emph{IEEE Commun. Lett.}, vol.~21, no.~9, pp. 2041--2044,
  Sept 2017.

\bibitem{yang2017}
G.~Yang, M.~Xiao, and H.~V. Poor, ``Low-latency millimeter-wave communications:
  Traffic dispersion or network densification?'' 2017, [Online]. Available:
  \url{arxiv.org/abs/1709.08410}.

\bibitem{liu2017}
C.-F. Liu, M.~Bennis, and H.~V. Poor, ``Latency and reliability-aware task
  offloading and resource allocation for mobile edge computing,'' 2017,
  [Online]. Available: \url{arxiv.org/abs/1710.00590}.

\bibitem{huang2017}
W.~Huang, W.~Chen, and H.~V. Poor, ``Energy efficient wireless pushing with
  request delay information and delivery delay constraint,'' \emph{IEEE
  Access}, vol.~5, April 2017.

\bibitem{golnarian2016}
S.~Golnarian, J.~N. Laneman, and M.~D. Lemmon, ``On the outage performance of
  an {IEEE} 802.11 broadcast scheme in vehicular ad hoc networks,'' in
  \emph{Proc. Allerton}, Sept 2016, pp. 101--106.

\bibitem{negi2002}
R.~Negi and J.~M. Cioffi, ``Delay-constrained capacity with causal feedback,''
  \emph{IEEE Trans Inf. Theory}, vol.~48, no.~9, pp. 2478--2494, Sep 2002.

\bibitem{tuninetti2001}
D.~Tuninetti and G.~Caire, ``Effect of delay constraint and causal feedback on
  the wideband performance of multiaccess block-fading channels,'' in
  \emph{Proc. Asilomar}, vol.~2, Nov 2001, pp. 1629--1633.

\bibitem{pahlavan89}
K.~Pahlavan, R.~Ganesh, and T.~Hotaling, ``Multipath propagation measurements
  on manufacturing floors at 910 {MH}z,'' \emph{Electronics Letters}, vol.~25,
  pp. 225--227, 1989.

\bibitem{rappaport91}
T.~S. Rappaport, S.~Y. Seidel, and K.~Takamizawa, ``Statistical channel impulse
  response models for factory and open plan building radio communicate system
  design,'' \emph{{IEEE} Trans. Commun.}, vol.~39, no.~5, pp. 794--807, May
  1991.

\bibitem{lozano2012}
A.~Lozano and N.~Jindal, ``Are yesterday's information-theoretic fading models
  and performance metrics adequate for the analysis of today's wireless
  systems?'' \emph{IEEE Commun. Mag.}, vol.~50, no.~11, pp. 210--217, November
  2012.

\bibitem{hassibi03}
B.~Hassibi and B.~M. Hochwald, ``How much training is needed in
  multiple-antenna wireless links?'' \emph{{IEEE} Trans. Inf. Theory}, vol.~49,
  no.~4, pp. 951--963, April 2003.

\bibitem{yoo06}
T.~Yoo and A.~Goldsmith, ``Capacity and power allocation for fading {MIMO}
  channels with channel estimation error,'' \emph{{IEEE} Trans. Inf. Theory},
  vol.~52, no.~5, pp. 2203--2214, May 2006.

\bibitem{goldsmith:1997}
A.~J. Goldsmith and S.-G. Chua, ``Variable-rate variable-power {MQAM} for
  fading channels,'' \emph{IEEE Trans. Commun.}, vol.~45, no.~10, pp.
  1218--1230, Oct 1997.

\bibitem{polyanskiy2010}
Y.~Polyanskiy, H.~V. Poor, and S.~Verdu, ``Channel coding rate in the finite
  blocklength regime,'' \emph{IEEE Trans. Inf. Theory}, vol.~56, no.~5, pp.
  2307--2359, May 2010.

\bibitem{nist800}
\BIBentryALTinterwordspacing
K.~Stouffer, V.~Pillitteri, S.~Lightman, M.~Abrams, and A.~Hahan, ``Guide to
  industrial controlsystems ({ICS}) security,'' National Institute of Standards
  and Technology, Tech. Rep. 800-82, May 2015. [Online]. Available:
  \url{http://dx.doi.org/10.6028/NIST.SP.800-82r2}
\BIBentrySTDinterwordspacing

\bibitem{neumann2007}
P.~Neumann, ``Communication in industrial automation--what is going on?''
  \emph{Control Engineering Practice}, vol.~15, no.~11, pp. 1332 -- 1347, 2007.

\bibitem{xie:2005}
Y.~Xie, C.~N. Georghiades, and A.~Arapostathis, ``Minimum outage probability
  transmission with imperfect feedback for {MISO} fading channels,'' \emph{IEEE
  Trans. Wireless Commun.}, vol.~4, no.~3, pp. 1084--1091, May 2005.

\bibitem{shin:2008}
W.~Y. Shin, S.~Y. Chung, and Y.~H. Lee, ``Diversity-multiplexing tradeoff and
  outage performance for {R}ician {MIMO} channels,'' \emph{IEEE Trans. Inf.
  Theory}, vol.~54, no.~3, pp. 1186--1196, March 2008.

\bibitem{marcum:1960}
J.~Marcum, ``A statistical theory of target detection by pulsed radar,''
  \emph{IRE Transactions on Information Theory}, vol.~6, no.~2, pp. 59--267,
  April 1960.

\bibitem{marcum}
Y.~Brychkov, ``On some properties of the marcum {Q}-function,'' \emph{Integral
  Transforms and Special Functions}, vol.~23, pp. 177--182, 3 2012.

\bibitem{davies:1980}
R.~B. Davies, ``Algorithm {AS} 155: The distribution of a linear combination of
  {C}hi-squared random variables,'' \emph{J. Royal Stat. Soc.}, vol.~29, no.~3,
  pp. 323--333, 1980.

\bibitem{akoum:2011}
S.~Akoum, M.~Kountouris, and R.~W. Heath, ``On imperfect {CSI} for the downlink
  of a two-tier network,'' in \emph{Proc. IEEE ISIT}, July 2011, pp. 553--557.

\bibitem{cao:2009}
L.~Cao and P.~Y. Kam, ``Goodput-optimal rate adaptation with imperfect channel
  state information,'' in \emph{Proc. IEEE VTC}, Sept 2009, pp. 1--5.

\bibitem{vakili:2006}
A.~Vakili, M.~Sharif, and B.~Hassibi, ``The effect of channel estimation error
  on the throughput of broadcast channels,'' in \emph{Proc. IEEE ICASSP},
  vol.~4, May 2006, pp. 29--32.

\bibitem{kang:2014}
J.~W. Kang, M.~Jang, S.~H. Kim, and D.~I. Kim, ``Outage analysis of
  multi-antenna rate adaptive systems with outdated feedback,'' \emph{IEEE
  Trans. Wireless Commun.}, vol.~13, no.~10, pp. 5453--5466, Oct 2014.

\bibitem{hampicke1999}
D.~Hampicke, A.~Richter, A.~Schneider, G.~Sommerkorn, R.~S. Thoma, and
  U.~Trautwein, ``Characterization of the directional mobile radio channel in
  industrial scenarios based on wideband propagation measurements,'' in
  \emph{Proc. IEEE VTC}, vol.~4, Sep 1999, pp. 2258--2262 vol.4.

\bibitem{nist1951}
\BIBentryALTinterwordspacing
R.~Candell, K.~A. Remley, J.~T. Quimby, D.~Novotny, A.~Curtin, P.~B. Papazian,
  M.~Kashef, and J.~Diener, ``Industrial wireless systems radio propagation
  measurements,'' National Institute of Standards and Technology, Tech. Rep.
  1951, January 2017. [Online]. Available: \url{doi.org/10.6028/NIST.TN.1951}
\BIBentrySTDinterwordspacing

\bibitem{bateman1}
H.~Bateman, \emph{Table of Integrals and Transforms}.\hskip 1em plus 0.5em
  minus 0.4em\relax McGraw-Hill Book Company, Inc, 1954, vol.~1.

\end{thebibliography}
\end{document}